\documentclass[11pt]{article}
% \usepackage[letterpaper,dvips,twoside,nomarginpar]{geometry}
% \usepackage[top=1in, bottom=1in, left=1in, right=1in]{geometry}

% \usepackage[compact]{titlesec}
% \titlespacing{\section}{1pt}{*0}{*0}
% \titlespacing{\subsection}{2pt}{*0}{*0}
% \titlespacing{\subsubsection}{2pt}{*0}{*0}

\usepackage{amsmath, amssymb, amsthm}
\usepackage{xspace}
\usepackage{comment}
\usepackage{mdwlist}
\usepackage{multicol}
\usepackage{wrapfig}
\usepackage{setspace}
\usepackage{url}
\usepackage[all]{xy}
\usepackage[hang,footnotesize,bf]{caption}
\usepackage[pdftex]{graphicx}

% \setstretch{1.15}

\newtheorem{theorem}{Theorem}[section]
\newtheorem{corollary}[theorem]{Corollary}
\newtheorem{lemma}[theorem]{Lemma}
\newtheorem*{definition}{Definition}
\newtheorem*{theorem*}{Theorem}
\newtheorem*{lemma*}{Lemma}

\long\def\symbolfootnote[#1]#2{\begingroup%
\def\thefootnote{\fnsymbol{footnote}}\footnote[#1]{#2}\endgroup}

\renewcommand{\thefootnote}{\fnsymbol{footnote}}

\newcommand{\VR}{\mathrm{VR}}

\newcommand{\Hom}{\mathrm{H_*}}
\newcommand{\Homi}{\mathrm{H_i}}

\newcommand{\ball}{\mathbf{ball}}

\newcommand{\dist}{\mathbf{d}}
\newcommand{\rdist}{\hat{\mathbf{d}}}

\newcommand{\R}{\mathbb{R}}
\newcommand{\Rnn}{\R_{\ge 0}}

\newcommand{\N}{\mathbb{N}}

\newcommand{\M}{\mathcal{M}}

\newcommand{\spread}{\Delta}
\newcommand{\e}{\varepsilon}

\newcommand{\id}{\mathrm{id}}

\newcommand{\rips}{\mathcal{R}}
\newcommand{\rrips}{\hat{\mathcal{R}}}
\newcommand{\srips}{\mathcal{Q}}
\newcommand{\ssrips}{\mathcal{S}}

\DeclareMathOperator*{\argmin}{argmin}

\newcommand{\dgk}{\mathrm{D}}

\newcommand{\rad}{\mathrm{rad}}
\newcommand{\parent}{\mathrm{par}}
\newcommand{\rep}{\mathrm{rep}}
\newcommand{\rel}{\mathrm{Rel}}
\newcommand{\child}{\mathrm{child}}
\newcommand{\net}{\mathcal{N}}

\newcommand{\cl}[1]{\overline{#1}}
\newcommand{\kpack}{K_{p}}

\begin{document}
  \title{Linear-Size Approximations to the Vietoris--Rips Filtration}
  \author{Donald R. Sheehy}
  % \date{}
  \maketitle

  \begin{abstract}
  The Vietoris--Rips filtration is a versatile tool in topological data analysis.
  It is a sequence of simplicial complexes built on a metric space to add topological structure to an otherwise disconnected set of points.
  It is widely used because it encodes useful information about the topology of the underlying metric space.
  This information is often extracted from its so-called persistence diagram.
  Unfortunately, this filtration is often too large to construct in full.
  We show how to construct an $O(n)$-size filtered simplicial complex on an $n$-point metric space such that its persistence diagram is a good approximation to that of the Vietoris--Rips filtration.
  This new filtration can be constructed in $O(n\log n)$ time.
  The constant factors in both the size and the running time depend only on the doubling dimension of the metric space and the desired tightness of the approximation.
  For the first time, this makes it computationally tractable to approximate the persistence diagram of the Vietoris--Rips filtration across all scales for large data sets.
  
  We describe two different sparse filtrations.
  The first is a zigzag filtration that removes points as the scale increases.
  The second is a (non-zigzag) filtration that yields the same persistence diagram.
  Both methods are based on a hierarchical net-tree and yield the same guarantees.

\end{abstract}

  \section{Introduction} % (fold)
\label{sec:introduction}

  There is an extensive literature on the problem of computing sparse approximations to metric spaces (see the book~\cite{narasimhan07geometric} and references therein). 
  There is also a growing literature on topological data analysis and its efforts to extract topological information from metric data (see the survey~\cite{carlsson09topology} and references therein). 
  One might expect that topological data analysis would be a major user of metric approximation algorithms, especially given that topological data analysis often considers simplicial complexes that grow exponentially in the number of input points.
  Unfortunately, this is not the case.
  The benefits of a sparser representation are sorely needed, but it is not obvious how an approximation to the metric will affect the underlying topology.
  The goal of this paper is to bring together these two research areas and to show how to build sparse metric approximations that come with topological guarantees.
  % These guarantees are phrased in the language of persistent homology.

  % In the spanner problem one seeks to approximate a metric by a sparse graph on the input points.
  % In the topological setting, we want a sparse simplicial complex, or more precisely, a growing sequence of sparse complexes, i.e.\ a filtration.
  The target for approximation is the Vietoris--Rips complex, which has a simplex for every subset of input points with diameter at most some parameter $\alpha$.
  % It adds topological structure to a finite point set for a given scale $\alpha$.
  The collection of Vietoris--Rips complexes at all scales yields the Vietoris--Rips filtration.
  The persistence algorithm takes this filtration and produces a persistence diagram representing the changes in topology corresponding to changes in scale~\cite{zomorodian05computing}.
  The Vietoris--Rips filtration has become a standard tool in topological data analysis because it encodes relevant and useful information about the topology of the underlying metric space~\cite{chazal09gromov-hausdorff}.
  It also extends easily to high dimensional data, general metric spaces, or even non-metric distance functions.
  
  Unfortunately, the Vietoris--Rips filtration has a major drawback: It's huge! 
  Even the $k$-skeleton (the simplices up to dimension $k$) has size $O(n^{k+1})$ for $n$ points.

  This paper proposes an alternative filtration called the sparse Vietoris--Rips filtration, which has size $O(n)$ and can be computed in $O(n\log n)$ time.
  Moreover, the persistence diagram of this new filtration is provably close to that of the Vietoris--Rips filtration.
  The constants depend only on the doubling dimension of the metric (defined below) and a user-defined parameter $\e$ governing the tightness of the approximation.
  For the $k$-skeleton, the constants are bounded by $\left(\frac{1}{\e}\right)^{O(kd)}$.
  % This is asymptotically the same as the number of $k$-simplices incident to a vertex in the Vietoris--Rips complex at scale $1$ in an $\e$-sample of a smooth $d$-manifold.  
  % Thus, it is unlikely that the asymptotic bounds on the constants can be improved without a sufficiently new approach.
  
  The main tool we use to construct the sparse filtration is the net-tree of Har-Peled and Mendel~\cite{har-peled06fast}.
  Net-trees are closely related to hierarchical metric spanners~\cite{gao06deformable,gottlieb08optimal} and their construction is analogous to data structures used for nearest neighbor search in metric spaces~\cite{cole06searching, clarkson99nearest, clarkson03nearest}. 
  
  % The construction and its analysis depend on two tricks.
  % The first is a perturbation of the input metric so that the sparsification of the Vietoris--Rips filtration does not change the topology.
  % The second is a reversal of isomorphisms in the zigzag persistence module so get an ordinary persistence module.
  % This allows us to apply stability results for ordinary persistence modules to the diagram we get from a zigzag filtration.
  
  \paragraph*{Outline}
  After reviewing some related work and definitions in Sections~\ref{sec:related} and~\ref{sec:background}, 
  we explain how to perturb the input metric using weighted distances in Section~\ref{sec:relaxed_rips}.
  This perturbation is used in the definition of a sparse zigzag filtration in Section~\ref{sec:sparse_rips}, i.e. one in which simplices are both added and removed as the scale increases.
  The full definition of the net-trees is given in Section~\ref{sec:net_trees}.
  Using the properties of the net-tree and the perturbed distances, we prove in Section~\ref{sec:sparsification} that removing points from the filtration does not change the topology.
  This implies that the zigzag filtration does not actually zigzag at the homology level (Subsection~\ref{sec:reversing_zigzags}).
  The zigzag filtration can then be converted into an ordinary (i.e. non-zigzag) filtration that also approximates the Vietoris--Rips filtration (Subsection~\ref{sec:no_zigzag}).
  The theoretical guarantees are proven in Section~\ref{sec:theoretical_guarantees}.
  Subsection~\ref{sec:approximation_guarantee} proves that the resulting persistence diagrams are good approximations to the persistence diagram of the full Vietoris--Rips filtration.
  The size complexity of the sparse filtration is shown to be $O(n)$ in Subsection~\ref{sec:linear_complexity}.
  Finally, in Section~\ref{sec:construction}, we outline the $O(n\log n)$-time construction, which turns out to be quite easy once you have a net-tree.
  
% section introduction (end)

  \section{Related Work} % (fold)
\label{sec:related}

  %  Most of this work is not really related, but is just background.
  %  Perhaps it would live more comfortably in the intro.

  The theory of persistent homology~\cite{edelsbrunner02topological,zomorodian05computing} gives an algorithm for computing the persistent topological features of a complex that grows over time.
  It has been applied successfully to many problem domains, including image analysis~\cite{carlsson08local}, biology~\cite{singh08topological, chung09persistence}, and sensor networks~\cite{silva07homological,silva07coverage}.
  See also the survey by Carlsson for background on the topological view of data~\cite{carlsson09topology}.
  It is also possible to consider the complexes that alternate between growing and shrinking in what is known as zigzag persistence~\cite{tausz11applications,carlsson10zigzag,carlsson09zigzag,milosavljevic11zigzag}.

  % The persistence algorithm takes as input a filtered simplicial complex.
  % A common choice is the Vietoris--Rips filtration which is constructed from the cliques of the $\alpha$-neighborhood graph for growing values of $\alpha$.
  % One notable use of this filtration is in the analysis of scalar fields over point cloud data~\cite{chazal09analysis}.
  
  Due to the rapid blowup in the size of the Vietoris--Rips filtration, some attempts have been made to build approximations.
  Some notable examples include witness complexes~\cite{boissonat07manifold,guibas07reconstruction,de-silva04topological} as well as the mesh-based methods of Hudson et al.\ in Euclidean spaces~\cite{hudson10topological}.

  The work most similiar to the current paper is by Chazal and Oudot~\cite{chazal08towards}.
  In that paper, they looked at a sequence of persistence diagrams on denser and denser subsamples.
  However, they were not able to combine these diagrams into a single diagram with a provable guarantee.
  Moreover, they were not able to prove general guarantees on the size of the filtration except under very strict assumptions on the data.
  
  % distance to measures?\cite{chazal11geometric}
  Recently, Zomorodian~\cite{zomorodian10tidy} and Attali et al.~\cite{attali11efficient} have presented new methods for simplifying Vietoris--Rips complexes.
  These methods depend only on the combinatorial structure.  
  However, they have not yielded results in simplifying filtrations, only static complexes.
  In this paper, we exploit the geometry to get topologically equivalent sparsification of an entire filtration.
  
% section related (end)
  \section{Background} % (fold)
\label{sec:background}

    \paragraph*{Doubling metrics} % (fold)

  For a point $p\in P$ and a set $S\subseteq P$, we will write $\dist(p, S)$ to denote the minimum distance from $p$ to $S$, i.e.\ $\dist(p,S) = \min_{q\in S}\dist(p,q)$.
  In a metric space $\M = (P,\dist)$, a \textbf{metric ball} centered at $p\in P$ with radius $r\in \R$ is the set $\ball(p,r) = \{q\in P : \dist(p,q)\le r\}$.
    
  \begin{definition}\label{def:doubling_dimension}
    The \textbf{doubling constant} $\lambda$ of a metric space $\M = (P,\dist)$ is the minimum number of metric balls of radius $r$ required to cover any ball of radius $2r$.
    The \textbf{doubling dimension} is $d = \lceil \lg \lambda \rceil$.
    A metric space whose doubling dimension is bounded by a constant is called a \textbf{doubling metric}.
  \end{definition}

  The \textbf{spread} $\spread$ of a metric space $\M = (P,\dist)$ is the ratio of the largest to smallest interpoint distances.
  A metric with doubling dimension $d$ and spread $\spread$ has at most $\spread^{O(d)}$ points.
  This is easily seen by starting with a ball of radius equal to the largest pairwise distance and covering it with $\lambda$ balls of half this radius.
  Covering all of the resulting balls by yet smaller balls and repeating $O(\log \spread)$ times results in balls that can contain at most one point each because the radii are smaller than the minimum interpoint distance.
  The number of such balls is $\lambda^{O(\log \spread)} = \spread^{O(\log \lambda)} = \spread^{O(d)}$.

% subsection doubling_metrics (end)
    \paragraph*{Simplicial Complexes} % (fold)

  % \begin{definition}\label{def:simplicial_complex}
    A \textbf{simplicial complex} $X$ is a collection of \textbf{vertices} denoted $V(X)$ and a collection of subsets of $V(X)$ called \textbf{simplices} that is closed under the subset operation, i.e.\ $\sigma\subset \psi$ and $\psi\in X$ together imply $\sigma\in X$.
    The \textbf{dimension} of a simplex $\sigma$ is $|\sigma| - 1$, where $|\cdot|$ denotes cardinality.
  % \end{definition}
  % 
  Note that this definition is combinatorial rather than geometric.
  These abstract simplicial complexes are not necessarily embedded in a geometric space.

% section simplicial_complexes (end)

    \paragraph*{Homology} % (fold)
  In this paper we will use simplicial homology over a field (see Munkres~\cite{munkres84elements} for an accessible introduction to algebraic topology).
  Thus, given a space $X$, the homology groups $\Homi(X)$ are vector spaces for each $i$.
  Let $\Hom(X)$ denote the collection of these homology groups for all $i$.

  The star subscript denotes the homomorphism of homology groups induced by a map between spaces, i.e. $f:X\to Y$ induces $f_{\star}:\Hom(X)\to\Hom(Y)$.
  We recall the functorial properties of the Homology operator, $\Hom(\cdot)$.
  In particular, $(f \circ g)_{\star} = f_\star \circ g_\star$ and $\id_{X\star} = \id_{\Hom(X)}$, where $\id$ indicates the identity map.

% subsection homology (end)

\paragraph*{Persistence Modules and Diagrams} % (fold)

A \textbf{filtration} is a nested sequence of topological spaces:
  $X_1 \subseteq X_2 \subseteq \cdots \subseteq X_n$.
  If the spaces are simplicial complexes (as with all the filtrations in this paper), then it is called a \textbf{filtered simplicial complex} (see Figure~\ref{fig:filtrations}, top).
  
  \begin{figure}[ht]
    \centering
      \includegraphics[width=.75\textwidth]{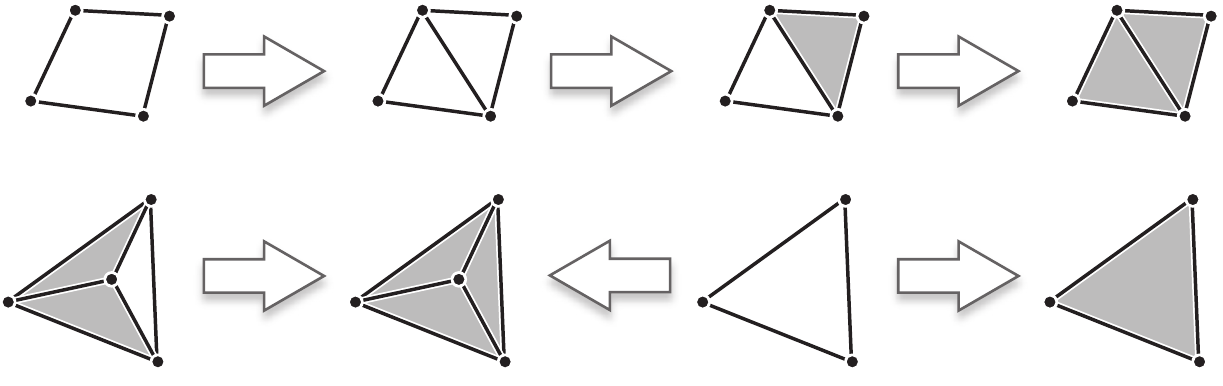}
    \caption{\label{fig:filtrations}
      Top: A filtered simplicial complex.  Bottom: A zigzag filtration of simplicial complexes.
    }
  \end{figure}

  A \textbf{persistence module} is a sequence of Homology groups connected by homomorphisms:
  \[
    \Hom(X_1) \to \Hom (X_2) \to \cdots \to \Hom(X_n).
  \]
  % Such a sequence is a  as long as the homomorphism $\Hom(X_i)\to \Hom(X_k)$ is the composition of $\Hom(X_j)\to \Hom(X_k)$ and $\Hom(X_i)\to \Hom(X_j)$ for all $i\le j\le k$.
  The homology functor turns a filtration with inclusion maps $X_1\hookrightarrow X_2 \hookrightarrow \cdots $ into a persistence module, but as we will see, this is not the only way to get one.

  One can also consider \textbf{zigzag filtrations}, which allow the inclusions to go in both directions:
  $X_1 \subseteq X_2 \supseteq X_3 \subseteq \cdots$.
  The resulting module is called a \textbf{zigzag module}.
  \[
    \Hom(X_1) \to \Hom (X_2) \leftarrow \Hom(X_3) \to \cdots.
  \]

  The \textbf{persistence diagram} of a persistence module is a multiset of points in $(\R \cup \{\infty\})^2$.
  Each point of the diagram represents a topological feature.
  The $x$ and $y$ coordinates of the points are the \emph{birth} and \emph{death} times of the feature and correspond to the indices in the persistence module where that feature appears and disappears.
  Points far from the diagonal persisted for a long time, while those ``non-persistent'' points near the diagonal may be considered \emph{topological noise}.
  By convention, the persistence diagram also contains every point $(x,x)$ of the diagonal with infinite multiplicity.
  
  Given a filtration $\mathcal{F}$, we let $\dgk\mathcal{F}$ denote the persistence diagram of the persistence module generated by $\mathcal{F}$.
  The persistence algorithm computes a persistence diagram from $\mathcal{F}$~\cite{zomorodian05computing}.
  It is also known how to compute a persistence diagram when $\mathcal{F}$ is a zigzag filtration~\cite{carlsson10zigzag,milosavljevic11zigzag}.

% subsection persistence_modules_and_diagrams (end)

    \paragraph*{Approximating Persistence Diagrams} % (fold)

Given two filtrations $\mathcal{F}$ and $\mathcal{G}$, we say that the persistence diagram $\dgk\mathcal{F}$ is a \textbf{$c$-approximation to the diagram} $\dgk\mathcal{G}$ if there is a bijection $\phi:\dgk\mathcal{F}\to \dgk\mathcal{G}$ such that for each $p\in \dgk\mathcal{F}$, the birth times of $p$ and $\phi(p)$ differ by at most a factor of $c$ and the death times also differ by at most a factor of $c$. 
  The reader familiar with stability results for persistent homology~\cite{cohen-steiner07stability, chazal09proximity} will recognize this as bounding the $\ell_\infty$-bottleneck distance between the persistence diagrams after reparameterizing the filtrations on a $\log$-scale.
  % This definition corresponds precisely to the $\ell_\infty$-bottleneck distance between the persistence diagrams after reparameterizing them on a $\log$-scale.
  % This bottleneck distance is used in many stability results on persistence diagrams.

  We will make use of two standard results on persistence diagrams.
  The first gives a sufficient condition for two persistence modules to yield identical persistence diagrams.
  \begin{theorem}\label{thm:persistence_equivalence}[Persistence Equivalence Theorem~\cite[page 159]{edelsbrunner09computational}]
    Consider two sequences of vector spaces connected by homomorphisms $\phi_i:U_i\to V_i$:
    \[
      \xymatrix{
        V_0 \ar[r] & V_1 \ar[r] & \cdots \ar[r] & V_{n-1} \ar[r] & V_n \\
        U_0 \ar[r] \ar[u] & U_1 \ar[u] \ar[r] & \cdots \ar[r] & U_{n-1} \ar[r] \ar[u] & U_n \ar[u] \\
      }
    \]
    % 
    % \begin{diagram}[height= 20pt]
    %   V_0  & \rTo  & V_1  & \rTo & \cdots & \rTo  & V_{n-1} & \rTo & V_n  \\
    %   \uTo &       & \uTo &      &        &       & \uTo    &      & \uTo \\
    %   U_0  & \rTo  & U_1  & \rTo & \cdots & \rTo  & U_{n-1} & \rTo & U_n      
    % \end{diagram}
    If the vertical maps are isomorphisms and all squares commute then the persistence diagram defined by the $U_i$ is the same as that defined by the $V_i$. 
  \end{theorem}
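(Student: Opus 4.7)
The plan is to prove the theorem by reducing it to the standard fact that the persistence diagram of a finite persistence module over a field is determined by the ranks of the composed structure maps. Concretely, for any $i \le j$ let $\phi_i^j: V_i \to V_j$ and $\psi_i^j : U_i \to U_j$ denote the compositions of the horizontal maps. The persistence diagram is recovered from the rank function $r(i,j) := \rank \phi_i^j$ by an inclusion--exclusion formula on half-open rectangles (the so-called $k$-triangle / Elder-Rule inversion), and likewise for the $U_\bullet$ module. So it suffices to show that $\rank \phi_i^j = \rank \psi_i^j$ for every $i \le j$.

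To get this rank equality, I would first observe that because every elementary square commutes, any pasted rectangle commutes as well. In particular, for fixed $i \le j$ the outer square
\[
\xymatrix{
U_i \ar[r]^{\psi_i^j} \ar[d]_{\phi_i} & U_j \ar[d]^{\phi_j} \\
V_i \ar[r]_{\phi_i^j} & V_j
}
\]
commutes, yielding the identity $\phi_j \circ \psi_i^j = \phi_i^j \circ \phi_i$. Since $\phi_i$ and $\phi_j$ are isomorphisms by hypothesis, composition with either of them preserves the rank of a linear map, and therefore $\rank \psi_i^j = \rank \phi_i^j$.

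Combining the two observations finishes the proof: the rank functions of the two modules coincide, so the inversion formula produces the same multiset of birth--death pairs, i.e. $\dgk U_\bullet = \dgk V_\bullet$. The only mildly delicate point is the appeal to the rank-function characterization of the diagram; if one prefers a more structural route, one can instead invoke the classification of persistence modules over a field as direct sums of interval modules and note that the data of vertical isomorphisms commuting with the horizontal maps literally provides an isomorphism of persistence modules, whose interval summands must then agree. Either way, the heart of the argument is just that isomorphisms compose to isomorphisms and rank is invariant under pre- and post-composition with isomorphisms; the main conceptual obstacle is really just choosing which packaging of the standard structure theory to cite.
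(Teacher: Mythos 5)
The paper does not prove this theorem; it is cited verbatim from Edelsbrunner and Harer's \emph{Computational Topology}, page 159, and used as a black box. Your proof is correct and is essentially the standard argument one finds in the literature: the rank equality $\rank \psi_i^j = \rank \phi_i^j$ follows immediately from $\phi_j \circ \psi_i^j = \phi_i^j \circ \phi_i$ together with the invariance of rank under pre- and post-composition with isomorphisms, and the diagram is then recovered from these ranks by the pairing/inclusion--exclusion formula (or, as you note, by observing that the vertical isomorphisms constitute an isomorphism of persistence modules, forcing agreement of the interval decompositions). Both packagings you mention are legitimate and equivalent once one works over a field; there is nothing to fix here.
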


  We prove approximation guarantees for persistence diagrams using the following lemma, which is a direct corollary of the Strong Stability Theorem of Chazal et al.~\cite{chazal09proximity} rephrased in the language of approximate persistence diagrams.
  \begin{lemma}\label{lem:persistence_approximation}[Persistence Approximation Lemma]
    For any two filtrations $\mathcal{A} = \{A_\alpha\}_{\alpha\ge 0}$ and $\mathcal{B}= \{B_\alpha\}_{\alpha\ge 0}$,
    if $A_{\alpha/c}\subseteq B_{\alpha}\subseteq A_{c\alpha}$ for all $\alpha\ge 0$ then
    the persistence diagram $\dgk \mathcal{A}$ is a $c$-approximation to the persistence diagram of $\dgk \mathcal{B}$. 
  \end{lemma}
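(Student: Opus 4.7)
The plan is to reduce the multiplicative interleaving in the hypothesis to the additive interleaving that is the natural input of the Strong Stability Theorem, and then translate the bottleneck conclusion back through the same change of variables. The key observation is that the definition of a $c$-approximation given earlier in this section already says that the bottleneck matching between $\dgk\mathcal{A}$ and $\dgk\mathcal{B}$ has $\ell_\infty$-cost at most $\log c$ once the filtrations are re-indexed on a logarithmic scale, so everything comes down to building that matching.

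First I would reparameterize: let $\tilde{A}_t := A_{e^t}$ and $\tilde{B}_t := B_{e^t}$ for $t \in \R$. These are still filtrations (the reindexing is monotone), and the hypothesis $A_{\alpha/c} \subseteq B_\alpha \subseteq A_{c\alpha}$ applied at $\alpha = e^t$ becomes
\[
  \tilde{A}_{t - \log c} \;\subseteq\; \tilde{B}_t \;\subseteq\; \tilde{A}_{t + \log c}
\]
for every $t$, and symmetrically $\tilde{B}_{t-\log c} \subseteq \tilde{A}_t \subseteq \tilde{B}_{t+\log c}$. Thus $\tilde{\mathcal{A}}$ and $\tilde{\mathcal{B}}$ are $(\log c)$-interleaved in the additive sense. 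Applying the homology functor and using functoriality (in particular, that any two composable inclusions in the interleaving commute because they are all inclusions into a common ambient space) shows that the persistence modules $\Hom(\tilde{\mathcal{A}})$ and $\Hom(\tilde{\mathcal{B}})$ form a $(\log c)$-interleaving of persistence modules in the sense of Chazal et al.

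Next I would invoke the Strong Stability Theorem of Chazal et al.\ to conclude that the $\ell_\infty$-bottleneck distance between $\dgk\tilde{\mathcal{A}}$ and $\dgk\tilde{\mathcal{B}}$ is at most $\log c$. This yields a bijection $\phi$ between the two diagrams such that each matched pair of points differs by at most $\log c$ in each coordinate. Undoing the logarithmic reparameterization sends a point $(t_b, t_d) \in \dgk\tilde{\mathcal{A}}$ to $(e^{t_b}, e^{t_d}) \in \dgk\mathcal{A}$, and the additive bound $|t_b - \phi(t_b)| \le \log c$ becomes the multiplicative bound that the corresponding birth times differ by at most a factor of $c$, and likewise for the death times. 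This is exactly the definition of a $c$-approximation given above the lemma.

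The only real content is verifying that the squares in the interleaving commute at the homology level; this is automatic here because every map in sight is induced by a literal inclusion of spaces, and inclusions into a common complex always commute. There is no hard step, which is why the author labels the result a ``direct corollary''; the proof is essentially bookkeeping around the change of variables $\alpha \mapsto \log \alpha$.
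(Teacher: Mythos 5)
Your proof is correct and matches the route the paper itself gestures at: the paper states the lemma as a direct corollary of the Strong Stability Theorem of Chazal et al.\ after a logarithmic reparameterization, and you supply exactly those details (the substitution $\alpha = e^t$ turning the multiplicative interleaving into an additive $(\log c)$-interleaving, the observation that commutativity is automatic because every map is an inclusion, and the inverse change of variables recovering the factor-of-$c$ bound on births and deaths). Nothing further is needed.
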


% subsection approximating_persistence_diagrams (end)
    \paragraph*{Contiguous Simplicial Maps} % (fold)

%  Possibly define restriction and identity notations.
%  Also give functorial notation for induced maps at the homology level.

  % \begin{definition}\label{def:retraction}
  %   Let $X$ and $Y$ be topological spaces such that $X\subseteq Y$.  
  %   A map $f:Y\to X$ is a \textbf{retraction} if $f(x) = x$ for all $x\in X$.
  % \end{definition}

  Contiguity gives a discrete version of homotopy theory for simplicial complexes.

  \begin{definition}\label{def:simplicial_map}
    Let $X$ and $Y$ be simplicial complexes.  
    A \textbf{simplicial map} $f:X\to Y$ is a function that maps vertices of $X$ to vertices of $Y$ 
    and $f(\sigma) := \bigcup_{v\in \sigma}f(v)$ is a simplex of $Y$ for all $\sigma\in X$.
  \end{definition}

  A simplicial map is determined by its behavior on the vertex set.
  Consequently, we will abuse notation slightly and identify maps between vertex sets and maps between simplices.
  When it is relevant and non-obvious, we will always prove that the resulting map between simplicial complexes is simplicial.

  \begin{definition}\label{def:contiguous}
    Two simplicial maps $f,g:X\to Y$ are \textbf{contiguous} if $f(\sigma)\cup g(\sigma)\in Y$ for all $\sigma\in X$.
  \end{definition}
  
  % \begin{definition}\label{def:identity_map}
  %   For any topological space $X$, the \textbf{identity map} is the function $\id_X:X\to X$ such that $\id_X(x) = x$ for all $x\in X$.
  % \end{definition}
  
  \begin{definition}\label{def:retraction}
    For any pair of topological spaces $X\subseteq Y$, a map $f:Y\to X$ is a \textbf{retraction} if $f(x)= x$ for all $x\in X$. 
    Equivalently, $f \circ i = \id_X$ where $i:X\hookrightarrow Y$ is the inclusion map.
  \end{definition}

  The theory of contiguity is a simplicial analogue of homotopy theory.
  If two simplicial maps are contiguous then they induce identical homomorphisms at the homology level~\cite[\S 12]{munkres84elements}.
  The following lemma gives a homology analogue of a deformation retraction.

  \begin{lemma}\label{lem:induced_isomorphism}
    Let $X$ and $Y$ be simplicial complexes such that $X\subseteq Y$ and let $i:X\hookrightarrow Y$ be the canonical inclusion map.
    If there exists a simplicial retraction $\pi:Y\to X$ 
    such that $i\circ \pi$ and $\id_Y$ are contiguous, 
    then $i$ induces an isomorphism $i_\star:\Hom(X)\to \Hom(Y)$ between the corresponding homology groups.
  \end{lemma}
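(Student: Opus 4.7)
The plan is to apply the homology functor and build the inverse of $i_\star$ out of $\pi_\star$, using functoriality for one composition and contiguity for the other.

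First I would record what the retraction condition gives after applying $\Hom(\cdot)$. Since $\pi$ is a simplicial retraction, $\pi\circ i = \id_X$ as simplicial maps, hence by the functorial property stated above the statement,
\[
\pi_\star \circ i_\star = (\pi\circ i)_\star = (\id_X)_\star = \id_{\Hom(X)}.
\]
This already forces $i_\star$ to be injective (and $\pi_\star$ to be surjective).

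Second, I would exploit the contiguity hypothesis to handle the other composition. The excerpt states that contiguous simplicial maps induce identical homomorphisms on homology (Munkres \S 12). Applying this to the contiguous pair $i\circ\pi$ and $\id_Y$ yields
\[
i_\star \circ \pi_\star = (i\circ\pi)_\star = (\id_Y)_\star = \id_{\Hom(Y)}.
\]
Combining this with the previous identity shows that $i_\star$ and $\pi_\star$ are two-sided inverses, so $i_\star:\Hom(X)\to\Hom(Y)$ is an isomorphism, as required.

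The argument really is this short once the two cited facts are in place, so there is no serious obstacle; the only point worth checking carefully is that $i\circ\pi$ is genuinely a simplicial map from $Y$ to $Y$ (so that the contiguity statement applies), which is immediate because both $i$ and $\pi$ are simplicial and composition of simplicial maps is simplicial. I would include a one-line remark to that effect to keep the proof self-contained.
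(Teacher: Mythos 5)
Your proof is correct and takes essentially the same route as the paper: use functoriality on $\pi\circ i = \id_X$ for one composition, use contiguity of $i\circ\pi$ with $\id_Y$ for the other, and conclude $i_\star$ is an isomorphism. The paper phrases the conclusion as injectivity plus surjectivity rather than as a two-sided inverse, but that is a cosmetic difference.
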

  \begin{proof}
    Since $i \circ \pi$ and $\id_Y$ are contiguous, the induced homomorphisms $(i \circ \pi)_\star:\Hom(Y)\to \Hom(Y)$ and $\id_{Y\star}:\Hom(Y)\to \Hom(Y)$ are identical~\cite[\S 12]{munkres84elements}.
    Since $\id_{Y\star}=(i \circ \pi)_\star = i_\star \circ \pi_\star$ is an isomorphism, it follows that $i_\star$ is surjective.

    Since $\pi$ is a retraction, $\pi\circ i = \id_X$ and thus 
    $(\pi \circ i)_\star:\Hom(X)\to \Hom(X)$ and $\id_{X\star}:\Hom(X)\to \Hom(X)$ are identical.
    Since $\id_{X\star}=(\pi \circ i)_\star = \pi_\star \circ i_\star$ is an isomorphism, it follows that $i_\star$ is injective.
    
    Thus, $i_\star$ is an isomorphism because it is both injective and surjective.
  \end{proof}

% section contiguity (end)
    
  \section{The Relaxed Vietoris--Rips Filtration} % (fold)
\label{sec:relaxed_rips}

  In this section, we relax the input metric so that it is no longer a metric, but it will still be provably close to the input.
  The new distance adds a small weight to each point that grows with $\alpha$.
  The intuition behind this process is illustrated in Figure~\ref{fig:points_on_a_line}.
  The weighted distance effectively shrinks the metric balls locally so that one ball may be covered by nearby balls.
  
  \begin{figure}[ht]
    \centering
      \includegraphics[width=.75\textwidth]{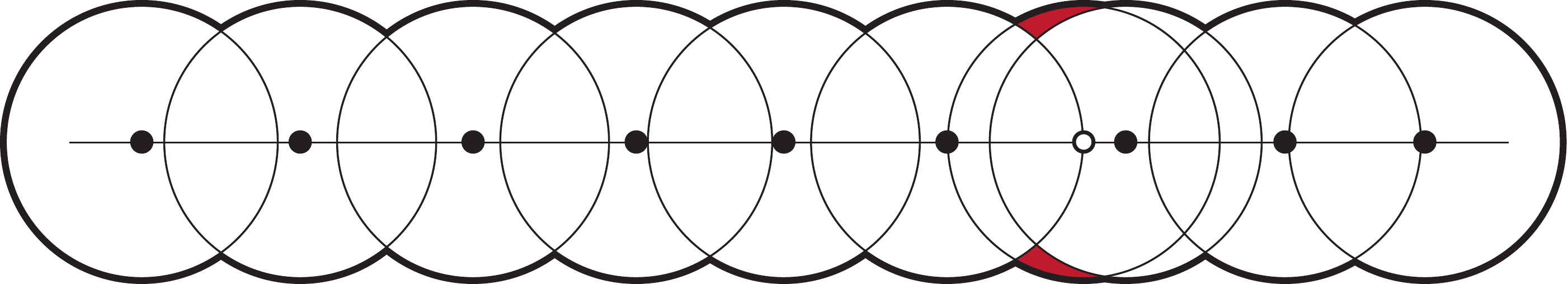}
      \includegraphics[width=.75\textwidth]{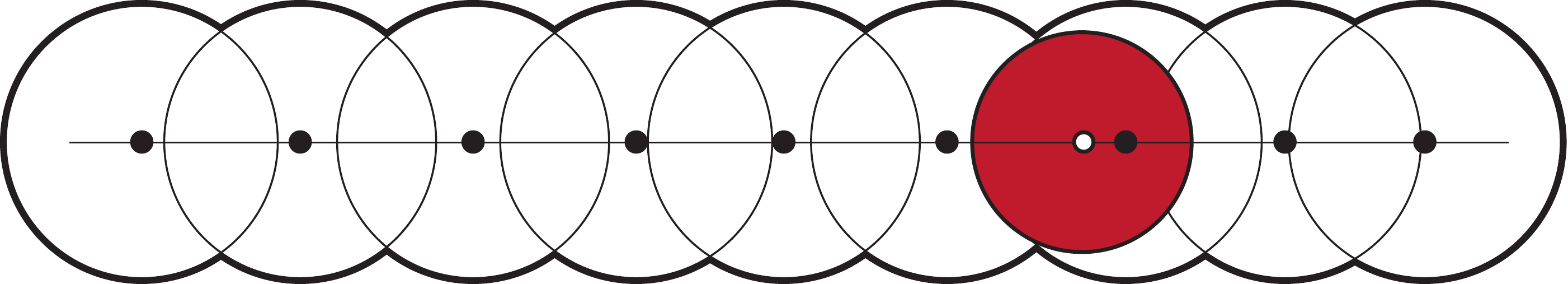}
    \caption{\label{fig:points_on_a_line}
      Top: Some points on a line.  The white point contributes little to the union of $\alpha$-balls.  Bottom: Using the relaxed distance, the new $\alpha$-ball is completely contained in the union of the other balls.  Later, we use this property to prove that removing the white point will not change the topology.
    }
  \end{figure}
  
  Throughout, we assume the user-defined parameter $\e\le \frac{1}{3}$ is fixed.
  Each point $p$ is assigned a \textbf{deletion time} $t_p\in \Rnn$.
  % Later, $t_p$ will correspond to the time in the zigzag filtration that $p$ is deleted.
  The specific choice of $t_p$ will come from the net-tree construction in Section~\ref{sec:net_trees}.
  For now, we will assume the deletion times are given, assuming only that they are nonnegative.
  The weight $w_p(\alpha)$ of point $p$ at scale $\alpha$ is defined as
  \[
    w_p(\alpha) := \left\{
      \begin{array}{ll}
        0 & \text{if $\alpha \le (1-2\e)t_p$}\\
        \frac{1}{2}(\alpha - (1-2\e)t_p) & \text{if $(1-2\e)t_p < \alpha < t_p $} \\
        \e \alpha & \text{if $t_p \le \alpha $}
      \end{array}
    \right.
  \]
  \begin{figure}[ht]
    \centering
      \includegraphics[width=.75\textwidth]{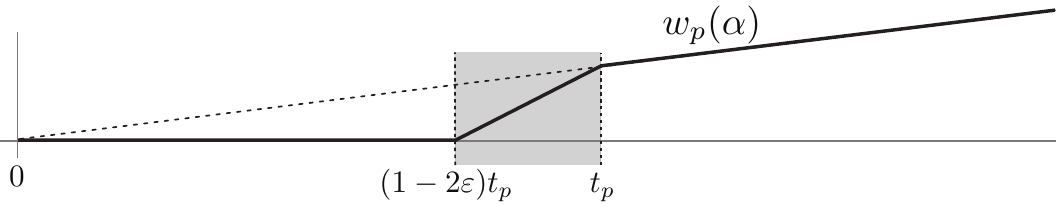}
    \caption{\label{fig:weights}
      The weight function for a point $p$.
      The weight is $0$ until just before its removal time $t_p$.
      Then there is a period of steeper increase (slope $= 1/2$) followed by slower increase (slope $= \e$).
    }
  \end{figure}

  The \textbf{relaxed distance} at scale $\alpha$ is defined as
  \[
    \rdist_\alpha(p,q) := \dist(p, q) + w_p(\alpha) + w_q(\alpha).
  \]
  For any pair $p,q\in P$, the relaxed distance $\rdist_\alpha(p, q)$ is monotonically non-decreasing in $\alpha$.
  In particular, $\rdist_\alpha\ge \rdist_0 = \dist$ for all $\alpha\ge 0$.
  Although distances can grow as $\alpha$ grows, this growth is sufficiently slow to allow the following lemma which will be useful later.
  \\
  
  \begin{lemma}\label{lem:rdist_induces_filtrations}
    If $\rdist_\alpha(p,q)\le \alpha \le \beta$ then $\rdist_\beta(p,q)\le \beta$.
  \end{lemma}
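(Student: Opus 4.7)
The plan is to show that each weight function $w_p$ is $\tfrac{1}{2}$-Lipschitz in $\alpha$, and then combine the weight increases for $p$ and $q$ with the hypothesis $\rdist_\alpha(p,q)\le\alpha$ to conclude.

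First I would inspect the three pieces of $w_p$ and confirm that the slopes in $\alpha$ are $0$, $\tfrac{1}{2}$, and $\e$ respectively. Since $\e\le\tfrac{1}{3}<\tfrac{1}{2}$, each slope is at most $\tfrac{1}{2}$. Continuity at the breakpoints is the only thing to check: at $\alpha=(1-2\e)t_p$ both the first and second pieces evaluate to $0$, and at $\alpha=t_p$ the middle piece evaluates to $\tfrac{1}{2}(t_p-(1-2\e)t_p)=\e t_p$, agreeing with the third piece. So $w_p$ is continuous and piecewise linear with maximum slope $\tfrac{1}{2}$, hence
\[
w_p(\beta)-w_p(\alpha)\le \tfrac{1}{2}(\beta-\alpha) \quad \text{whenever } \alpha\le\beta,
\]
and the same bound holds for $w_q$.

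Adding these two inequalities yields $w_p(\beta)+w_q(\beta)\le w_p(\alpha)+w_q(\alpha)+(\beta-\alpha)$. Combining with the definition of $\rdist$ and the hypothesis $\rdist_\alpha(p,q)\le\alpha$ gives
\[
\rdist_\beta(p,q)=\dist(p,q)+w_p(\beta)+w_q(\beta)\le \rdist_\alpha(p,q)+(\beta-\alpha)\le\alpha+(\beta-\alpha)=\beta,
\]
which is the desired conclusion.

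There is no real obstacle here: the entire content of the lemma is that the slope of $w_p$ never exceeds $\tfrac{1}{2}$, so that the sum of two such weights grows no faster than $\alpha$ itself. The only thing to be careful about is the case analysis at the breakpoints of the piecewise definition, which is handled by the continuity check above; once that is in hand, the Lipschitz bound on $w_p$ is immediate and the rest is a one-line arithmetic chain.
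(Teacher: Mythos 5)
Your proof is correct and follows the same approach as the paper: both use the $\tfrac{1}{2}$-Lipschitz bound on the weight functions and then chain the inequalities to conclude $\rdist_\beta(p,q)\le\rdist_\alpha(p,q)+(\beta-\alpha)\le\beta$. The only difference is that you spell out the verification of the Lipschitz property (slopes and continuity at breakpoints), which the paper simply asserts.
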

  \begin{proof}
    The weight of a point is $\frac{1}{2}$-Lipschitz in $\alpha$, so $w_p(\beta)\le w_p(\alpha) + \frac{1}{2}|\beta-\alpha|$, and similarly, $w_q(\beta)\le w_q(\alpha) + \frac{1}{2}|\beta-\alpha|$.
    So,
    \begin{align*}    
      \rdist_\beta(p,q) 
        &= \dist(p,q) + w_p(\beta) + w_q(\beta)\\
        &\le \dist(p,q) + w_p(\alpha) + w_q(\alpha) + (\beta-\alpha)\\
        &= \rdist_\alpha(p,q) + \beta - \alpha \\
        &\le \beta
    \end{align*}
  \end{proof}
  
  Given a set $P$, a distance function $\dist':P\times P\to \R$, and a scale parameter $\alpha\in \R$, we can construct a \textbf{Vietoris--Rips complex}
  \[
    \VR(P,\dist', \alpha) := \{ \sigma\subset P : \dist'(p,q)\le \alpha \text{ for all $p,q\in \sigma$}\}.
  \]
  The Vietoris--Rips complex associated with the input metric space $(P,\dist)$ is 
  $\rips_\alpha := \VR(P,\dist,\alpha)$.
  The \textbf{relaxed Vietoris--Rips complex} is
  $\rrips_\alpha := \VR(P,\rdist_\alpha,\alpha)$.

  By considering the family of Vietoris--Rips complexes for all values of $\alpha\ge 0$, we get the \textbf{Vietoris--Rips filtration}, $\rips:=\{\rips_\alpha\}_{\alpha\ge 0}$.  
  Similarly, we may define the \textbf{relaxed Vietoris--Rips filtration}, $\rrips:=\{\rrips_\alpha\}_{\alpha\ge 0}$.
  Lemma~\ref{lem:rdist_induces_filtrations} implies that $\rrips$ is indeed a filtration.
  The filtrations $\rips$ and $\rrips$ are very similar.
  The following lemma makes this similarity precise via a multiplicative interleaving.

  \begin{lemma}\label{lem:rips_relaxed_rips_interleave}
    For all $\alpha\ge 0$, $\rips_{\frac{\alpha}{c}} \subseteq \rrips_\alpha \subseteq \rips_\alpha$, where $c = \frac{1}{1-2\e}$.
  \end{lemma}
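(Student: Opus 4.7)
The plan is to prove the two inclusions separately, both via simplex-by-simplex arguments that reduce to comparing $\rdist_\alpha(p,q)$ with $\dist(p,q)$ on every edge of a given simplex.

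For the right inclusion $\rrips_\alpha \subseteq \rips_\alpha$, I would use the trivial observation that weights are nonnegative, so $\rdist_\alpha(p,q) \ge \dist(p,q)$. Hence any simplex $\sigma \in \rrips_\alpha$ satisfies $\dist(p,q) \le \rdist_\alpha(p,q) \le \alpha$ for every pair $p,q \in \sigma$, giving $\sigma \in \rips_\alpha$.

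For the left inclusion $\rips_{\alpha/c} \subseteq \rrips_\alpha$, the key lemma I would prove first is the pointwise bound $w_p(\alpha) \le \e\alpha$ for every point $p$ and every scale $\alpha \ge 0$. This is a case analysis on the three pieces of the definition of $w_p$: in the zero regime the bound is trivial; in the top regime $w_p(\alpha) = \e\alpha$ exactly; and in the middle regime $(1-2\e)t_p < \alpha < t_p$ the bound $\frac{1}{2}(\alpha - (1-2\e)t_p) \le \e\alpha$ follows from the observation that $\alpha < t_p$ implies $(1-2\e)t_p > (1-2\e)\alpha$, so $\alpha - (1-2\e)t_p < 2\e\alpha$. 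Note this is where the assumption $\e \le \tfrac{1}{3}$ (or at least $\e < \tfrac{1}{2}$) is used to ensure $1-2\e > 0$ and hence that $c$ is a finite positive constant. Given this bound, for any $\sigma \in \rips_{\alpha/c}$ and any $p,q \in \sigma$ we have $\dist(p,q) \le \alpha/c = (1-2\e)\alpha$, so
\[
\rdist_\alpha(p,q) = \dist(p,q) + w_p(\alpha) + w_q(\alpha) \le (1-2\e)\alpha + 2\e\alpha = \alpha,
\]
which places $\sigma$ in $\rrips_\alpha$.

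There is not really a main obstacle here; the only place one has to be slightly careful is verifying the weight bound in the middle piecewise regime, where both $\alpha$ and $t_p$ appear and one has to remember that $\alpha < t_p$ in that regime in order to trade a factor of $t_p$ for a factor of $\alpha$. Everything else is arithmetic and the definitions.
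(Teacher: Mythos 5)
Your proof is correct and takes essentially the same route as the paper: both inclusions are checked at the level of edges, the right inclusion follows from $\rdist_\alpha \ge \dist$, and the left inclusion follows from the bound $w_p(\alpha) \le \e\alpha$ combined with $\dist(p,q) \le (1-2\e)\alpha$. The only difference is that you spell out the three-case verification of $w_p(\alpha) \le \e\alpha$, which the paper simply asserts as following ``by definition.''
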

  \begin{proof}
    To prove inclusions between Vietoris--Rips complexes, it suffices to prove inclusion of the edge sets.
    For the first inclusion, we must prove that for any pair $p,q$, if $\dist(p,q)\le \frac{\alpha}{c}$ then $\rdist_\alpha(p,q)\le \alpha$.
    Fix any such pair $p,q$.
    By definition, $w_p(\alpha)\le \e \alpha$ and $w_q(\alpha)\le \e \alpha$.
    So,
    \[  
      \rdist_{\alpha}(p,q) 
        = \dist(p,q) + w_p(\alpha) + w_q(\alpha)  %\because{by the definition of $\rdist_{\alpha}$}\\
        \le \frac{\alpha}{c} + 2 \e \alpha  %\because{}\\ 
        = \alpha. % \because{$\e = \frac{\e_0}{1-2\e_0}$}.
    \]

    For the second inclusion, $\rdist_\alpha\ge \dist$.
    So, if $\rdist_\alpha(p,q)\le \alpha$ then $\dist(p,q)\le \alpha$ as well.
    Thus any edge of $\rrips_\alpha$ is also an edge of $\rips_\alpha$.
  \end{proof}

% section the_relaxed_rips_filtration (end)

  \section{The Sparse Zigzag Vietoris--Rips Filtration} % (fold)
\label{sec:sparse_rips}

  We will construct a sparse subcomplex of the relaxed Vietoris--Rips complex $\rrips_\alpha$ that is guaranteed to have linear size for any $\alpha$.
  In fact, we will get a zigzag filtration that only has a linear total number of simplices, yet its persistence diagram is identical to that of the relaxed Vietoris--Rips filtration.

  We define the \textbf{open net} $\net_\alpha$ at scale $\alpha$ to be the subset of $P$ with deletion time greater than $\alpha$, i.e.\ 
  \[
    \net_\alpha:=\{p\in P : t_p > \alpha\}.
  \]  
  Similarly, the \textbf{closed net} at scale $\alpha$ is 
  \[
    \cl\net_\alpha:=\{p\in P : t_p \ge \alpha\}.
  \]

  % The difference $\cnet_\alpha\setminus \net_\alpha$ is exactly the set of points $p$ with $t_p=\alpha$.

  The \textbf{sparse zigzag Vietoris--Rips complex} $\srips_\alpha$ at scale $\alpha$ is just the subcomplex of $\rrips_\alpha$ induced on the vertices of $\net_\alpha$.
  Formally,
  \[
    \srips_\alpha := \{\sigma\in \rrips_\alpha : \sigma\subseteq \net_\alpha\} = \VR(\net_\alpha, \rdist_\alpha, \alpha).
  \]
  We also define a closed version of the sparse zigzag Vietoris--Rips complex:
  \[
    \cl\srips_\alpha := 
    % \{\sigma\in \rrips_\alpha : \sigma\subseteq \cl\net_\alpha\} = 
    \VR(\cl\net_\alpha, \rdist_\alpha, \alpha).
  \]  
  Note that if $\alpha\neq t_p$ for all $p\in P$ then $\net_\alpha = \cl\net_\alpha$ and $\srips_\alpha = \cl\srips_\alpha$.
  
  The complexes $\rips_\alpha$, $\rrips_\alpha$, $\srips_\alpha$, and $\cl\srips_\alpha$ are well-defined for all $\alpha\ge 0$, however, they only change at discrete scales.
  % The filtrations can be discretized by considering a discrete set of scales.
  Let $A = \{a_i\}_{i\in\N}$ be an ordered, discrete set of nonnegative real numbers such that $t_p\in A$ for all $p\in P$ and $\alpha\in A$ for any pair $p,q$ such that $\alpha = \rdist_\alpha(p,q)$.
  That is, $A$ contains every scale at which a combinatorial changes happens, either a point deletion or an edge insertion.
  This implies that $\net_{a_i} = \cl\net_{a_{i+1}}$ and thus, using Lemma~\ref{lem:rdist_induces_filtrations}, that $\srips_{a_i}\subseteq \cl\srips_{a_{i+1}}$.

  The sparse Vietoris--Rips complexes can be arranged into a zigzag filtration $\srips$ as follows.
  \[
    \cl\srips_{a_1} \hookleftarrow
    \srips_{a_1} \hookrightarrow
    \cl\srips_{a_2} \hookleftarrow
    \srips_{a_2} \hookrightarrow
    % \cl\srips_{a_3} \hookleftarrow
    \cdots
  \]

  We will return to $\srips$ later as it has some interesting properties.
  However, at this point, it is underspecified as we have not yet shown how to compute the deletion times for the vertices.
  The next section will fill this gap.

  \section{Hierarchical Net-Trees} % (fold)
\label{sec:net_trees}

  The following treatment of net-trees is adapted from the paper by Har-Peled and Mendel~\cite{har-peled06fast}.

  \begin{definition}\label{def:net-tree}
    A \textbf{net-tree} of a metric $\M =(P,\dist)$ is a rooted tree $T$ with vertex $v\in T$ having a \textbf{representative point} $\rep(v)\in P$.
    There are $n = |P|$ leaves, each represented by a different point of $P$.
    Each non-root vertex $v\in T$ has a unique \textbf{parent} $\parent(v)$.
    The set of vertices with the same parent $v$ are called the \textbf{children} of $v$, denoted $\child(v)$.
    If $\child(v)$ is nonempty then for some $u\in \child(v)$, $\rep(u) = \rep(v)$.
    The set $P_v\subseteq P$ denotes the points represented by the leaves of the subtree rooted at $v$.
    Each vertex $v\in T$ has an associated radius $\rad(v)$ satisfying the following two conditions.
    \begin{enumerate}
      \item \textbf{Covering Condition:} $P_v\subset \ball(\rep(v),\rad(v))$, and
      \item \textbf{Packing Condition:} if $v$ is not the root, then 
      \[
        P\cap \ball(\rep(v), \kpack \rad(\parent(v)))\subseteq P_v,
      \]
      where $\kpack$ (the ``$p$'' is for ``packing'') is a constant independent of $\M$ and $n$.
    \end{enumerate}
  \end{definition}

  \begin{figure}[ht]
    \centering
      \includegraphics[width=\textwidth]{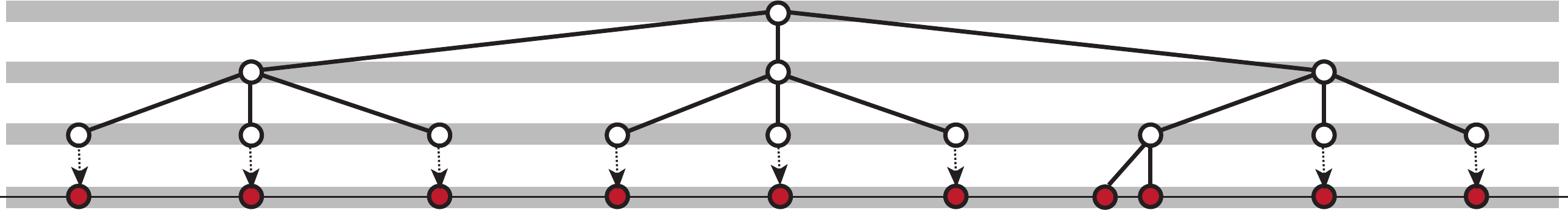}
    \caption{\label{fig:net_tree}
      A net-tree is built over the set points from Figure~\ref{fig:points_on_a_line}.
      Each level of the tree represents a sparse approximation to the original point set at a different scale.
    }
  \end{figure}

  The radii of the net tree nodes are always some constant times larger than the radius of their children.
  Simple packing arguments guarantee that no node of the tree has more than $\lambda^{O(1)}$ children, where $\lambda$ is the doubling constant of the metric.
  The whole tree can be constructed in $O(n\log n)$ randomized time or in $O(n\log \Delta)$ time deterministically~\cite{har-peled06fast}.
  Moreover, it is important to note the construction does not require that we know the doubling dimension in advance.
  
  Given a net-tree $T$ for $\M=(P,\dist)$ and a point $p\in P$, let $v_p$ denote the least ancestor among the nodes in $T$ represented by $p$.
  % \[
  %   v_p = \lca\{v\in T : \rep(v) = p\}.
  % \]  
  For each $p\in P$ the deletion time $t_p$ is defined as
  \[
    t_p := \frac{1}{\e(1-2\e)}\rad(\parent(v_p)).
  \]
  This is just the radius of the parent of $v_p$ with a small scaling factor included for technical reasons.
  When the scale $\alpha$ reaches $t_p$, we remove point $p$ from the (zigzag) filtration.
  The choice of weights as a function of $t_p$ guarantees that any point with relaxed distance at most $t_p$ from $p$ will also have relaxed distance at most $t_p$ from $\rep(\parent(v_p))$.
  As we prove later, this guarantees that the topology of the Rips complex does not change when we remove $p$ at scale $t_p$.

  \begin{figure}[ht]
    \centering
      \includegraphics[width=0.31\textwidth]{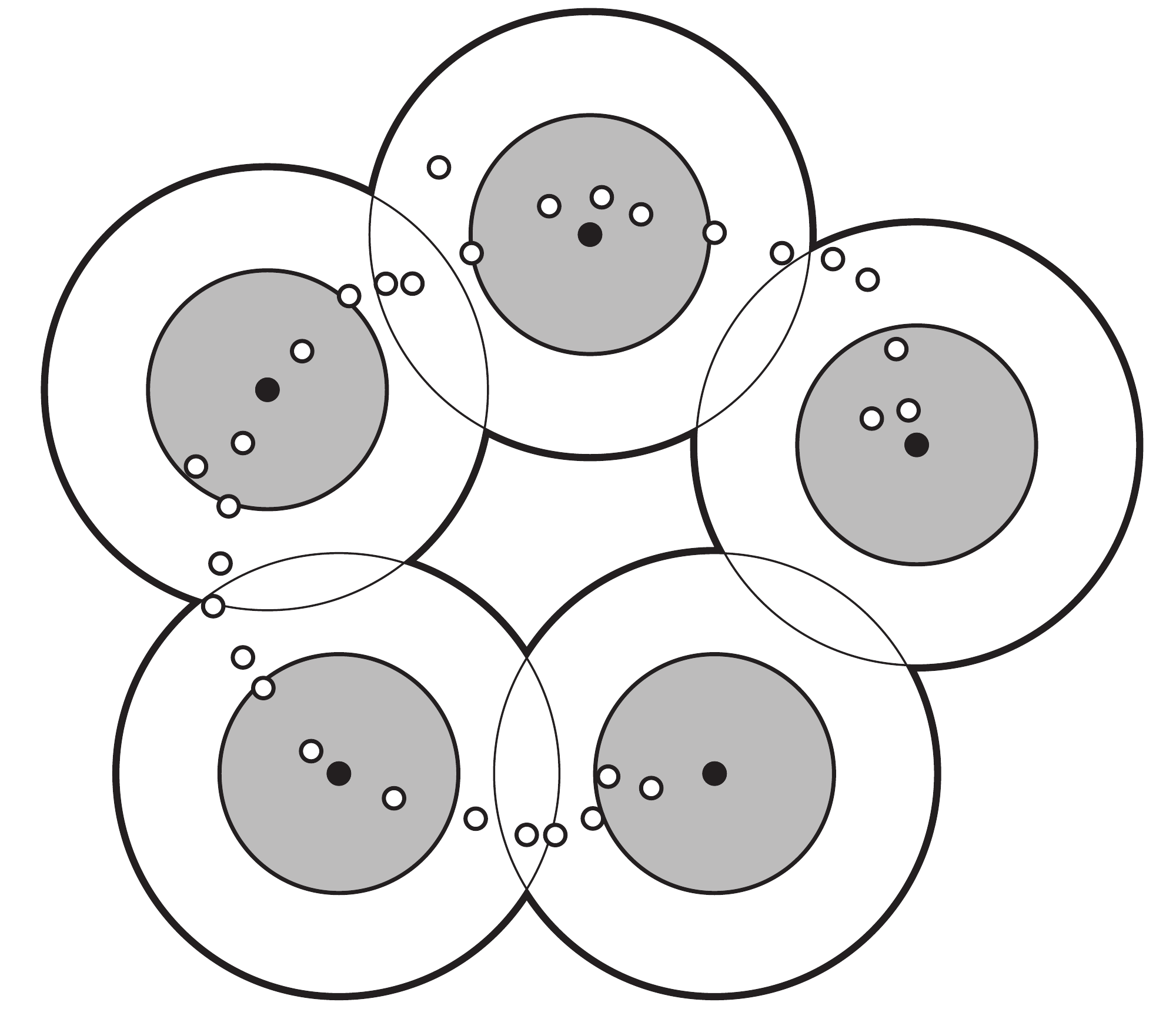}
      \includegraphics[width=0.31\textwidth]{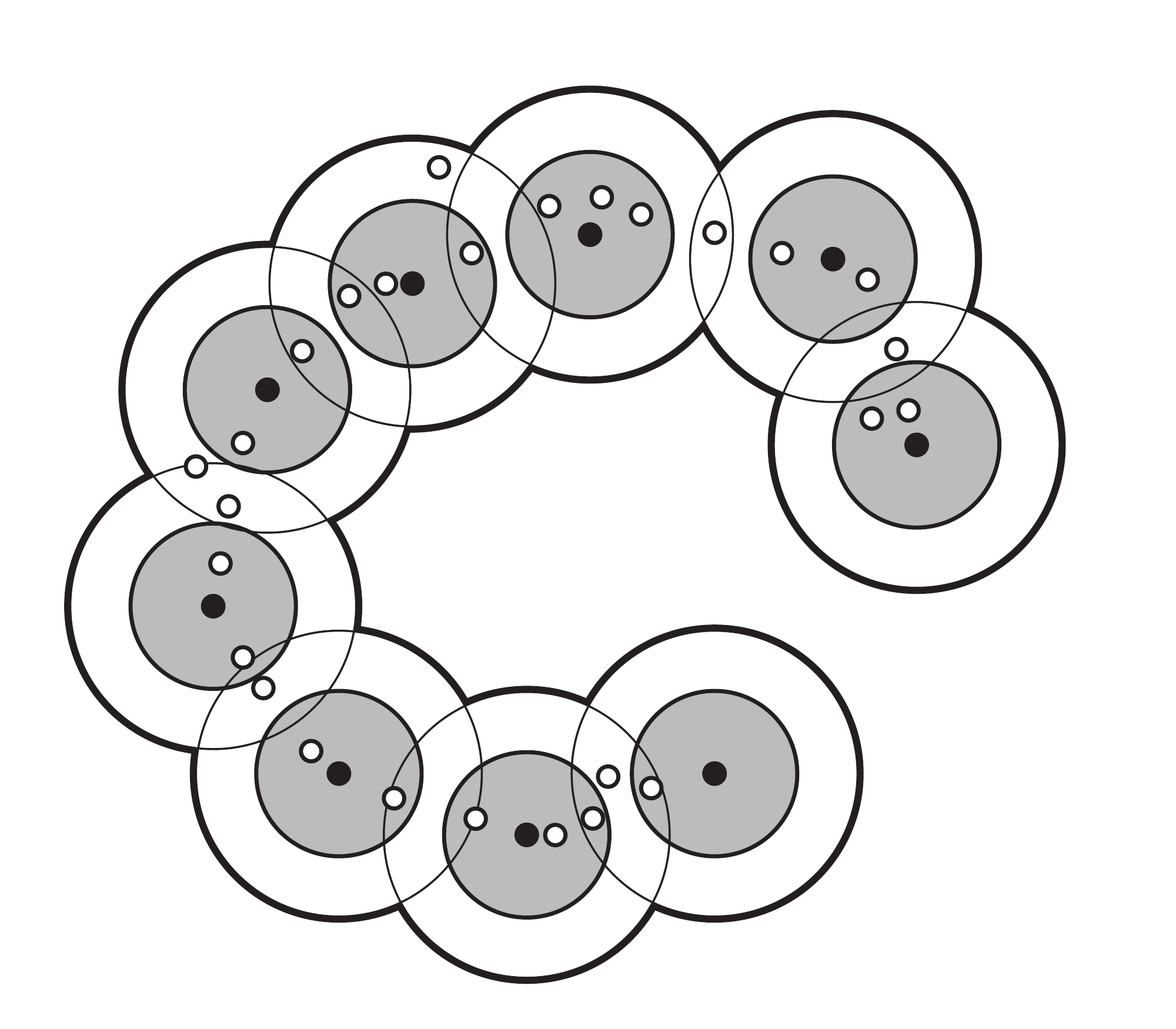}
      \includegraphics[width=0.31\textwidth]{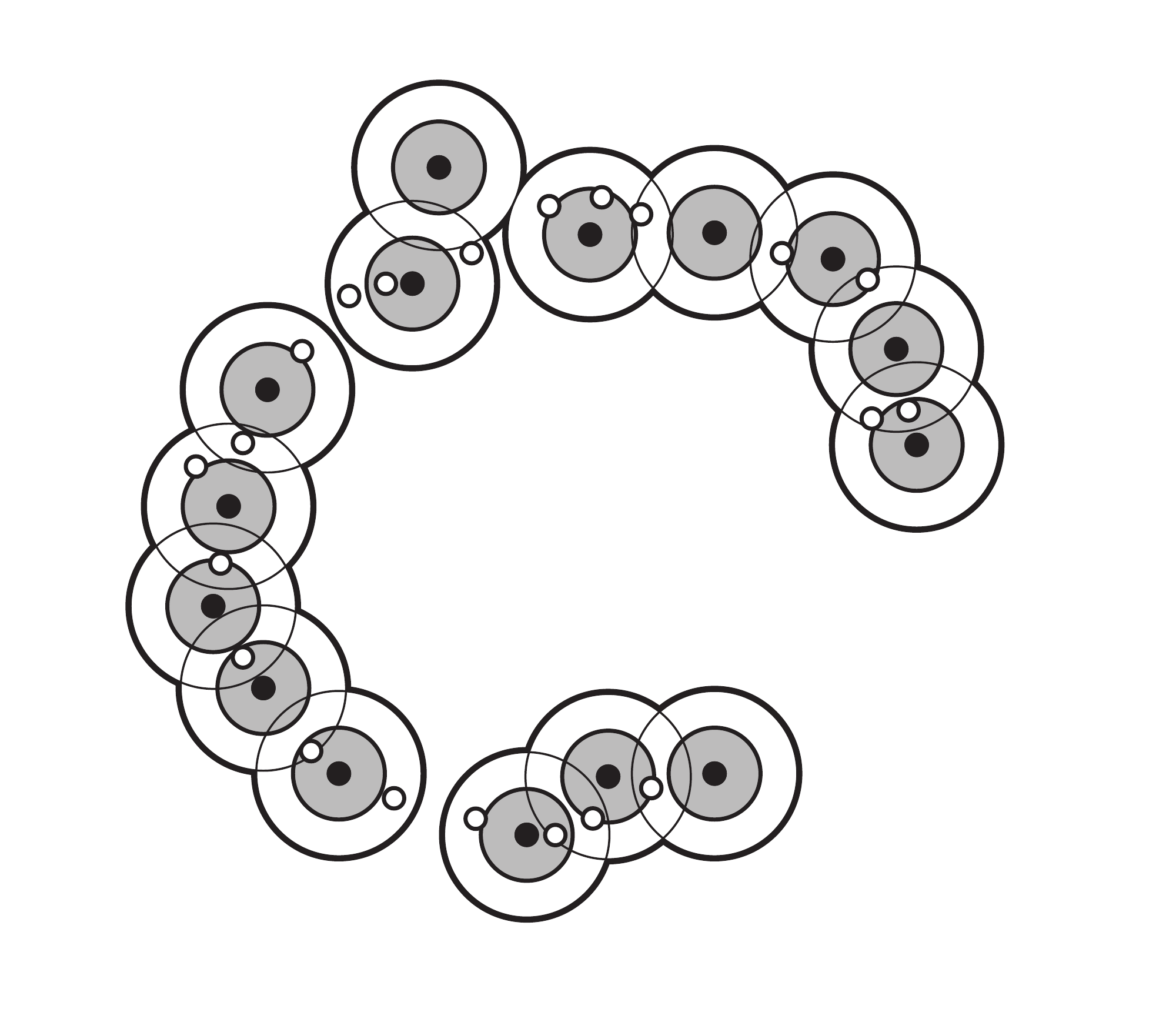}
    \caption{\label{fig:pack_cover}
      Three levels of a net tree for a collection of points in the plane.  In each level, the larger disks cover the rest of the points, whereas the smaller disks are disjoint, i.e.\ they pack. 
    }
  \end{figure}

  For a fixed scale $\alpha\in \R$, the set $\net_\alpha$
  % \[
  %   \net_\alpha=\{p\in P : t_p\ge \alpha\}
  % \]  
  is a subset of points of $P$ induced by the net-tree.
  The sets $\net_\alpha$ are the \textbf{nets} of the net-tree.
  For any $\alpha$ it satisfies a packing condition and a covering condition as defined in the following lemma.
  
  \begin{lemma}\label{lem:nets_pack_and_cover}
    Let $\M = (P,\dist)$ be a metric space and let $T$ be a net-tree for $\M$.  
    For all $\alpha\ge 0$, the net $\net_\alpha$ induced by $T$ at scale $\alpha$ satisfies the following two conditions.
    \begin{enumerate}
      \item \textbf{Covering Condition:} For all $p\in P$, $\dist(p,\net_\alpha) \le \e(1-2\e)\alpha$.
      \item \textbf{Packing Condition:} For all distinct $p,q\in \net_\alpha$, $\dist(p,q)\ge \kpack\e(1-2\e)\alpha$
    \end{enumerate}
  \end{lemma}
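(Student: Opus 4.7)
The plan is to prove each condition separately by walking along paths in the net-tree and invoking the tree's covering and packing axioms together with the formula $t_p = \rad(\parent(v_p))/(\e(1-2\e))$.

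For the \textbf{covering condition}, I would fix $p \in P$ and consider the path from the leaf of $p$ to the root. Let $u^*$ be the lowest node on this path whose representative lies in $\net_\alpha$; such a node exists since, by convention, the root's representative can be assigned infinite deletion time. If $u^*$ is the leaf, then $p \in \net_\alpha$ and we are done. Otherwise, let $u^-$ be the child of $u^*$ on the path, and set $q := \rep(u^-)$, so $q \notin \net_\alpha$ by the choice of $u^*$. Using the representative-propagation axiom (``if $\child(v)$ is nonempty then for some $u \in \child(v)$, $\rep(u) = \rep(v)$''), the leaf of $q$ must be a descendant of $u^-$; since $\rep(\parent(u^-)) = \rep(u^*) \ne q$, we can identify $u^- = v_q$ and hence $u^* = \parent(v_q)$. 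Therefore $\rad(u^*) = \e(1-2\e)\, t_q \le \e(1-2\e)\alpha$. The net-tree's covering condition applied to $u^*$ gives $p \in P_{u^*} \subseteq \ball(\rep(u^*), \rad(u^*))$, and since $\rep(u^*) \in \net_\alpha$, we conclude $\dist(p, \net_\alpha) \le \rad(u^*) \le \e(1-2\e)\alpha$.

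For the \textbf{packing condition}, I would argue by contradiction. Suppose $p, q \in \net_\alpha$ are distinct with $\dist(p,q) < \kpack \e(1-2\e)\alpha$. Because $p, q \in \net_\alpha$, both $\rad(\parent(v_p)) > \e(1-2\e)\alpha$ and $\rad(\parent(v_q)) > \e(1-2\e)\alpha$, so $\dist(p,q) < \kpack \rad(\parent(v_p))$ and $\dist(p,q) < \kpack \rad(\parent(v_q))$. Applying the net-tree's packing condition to $v_p$ yields $q \in P_{v_p}$, and symmetrically $p \in P_{v_q}$. Thus $v_p$ is an ancestor of the leaf of $q$ and $v_q$ is an ancestor of the leaf of $p$. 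Since $\rep(v_p) = p \ne q = \rep(v_q)$ forces $v_p \ne v_q$, one must be a strict ancestor of the other; WLOG $v_p$ is a strict ancestor of $v_q$. Then on the path from the leaf of $p$ upward, we encounter $v_q$ (with representative $q$) strictly before $v_p$ (with representative $p$). The set of ancestors of the leaf of $p$ that have representative $p$ forms a contiguous spine from the leaf up to $v_p$ (established via the representative-propagation axiom and the uniqueness of the leaf representing $p$), so $v_q$, sitting inside this spine, must also have representative $p$—contradicting $\rep(v_q) = q$.

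The main obstacle will be rigorously establishing the spine-contiguity fact used in the packing argument: namely, that along the path from any leaf $\ell$ to the root, the nodes sharing $\rep(\ell)$ as their representative form an uninterrupted prefix. This requires a small induction showing that if $\rep(w) = \rep(\ell)$ for some ancestor $w$ of $\ell$, then propagation forces every node on the path between $\ell$ and $w$ to carry that same representative. Once this structural fact is in hand, the rest of the argument is essentially a direct application of the net-tree's covering and packing axioms combined with the defining formula for $t_p$.
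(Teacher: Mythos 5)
Your proof is correct, and the covering argument is essentially the one in the paper: walk up from the leaf of $p$ to the lowest ancestor $u^*$ whose representative survives in $\net_\alpha$, identify the child $u^-$ on the path as $v_q$ where $q = \rep(u^-)$, translate $t_q \le \alpha$ into $\rad(u^*) \le \e(1-2\e)\alpha$, and apply the tree's covering condition at $u^*$. (The paper glosses over why $u^- = v_q$; your explicit appeal to representative propagation makes this step cleaner.) The packing argument is a reorganization of the paper's: the paper takes $t_p \le t_q$ WLOG, asserts $q \notin P_{v_p}$, and applies the tree's packing condition just once; you instead argue by contradiction, apply the packing condition to both $v_p$ and $v_q$, use the two resulting containments to place $v_p$ and $v_q$ on a common leaf-to-root path, and derive the contradiction from spine contiguity. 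The two routes rest on the same two underlying facts --- contiguity of the representative spine and monotonicity of radii along root paths --- but yours is symmetric in $p$ and $q$ at the cost of a few extra sentences, whereas the paper's WLOG version is shorter but hides exactly the spine-contiguity inference you flag as the ``main obstacle.'' That flag is well placed: the paper uses the fact tacitly in both parts, so stating and proving it (a one-line induction down the propagation chain, as you sketch) is a genuine improvement in rigor rather than an extra burden your approach has incurred.
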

  \begin{proof}
    First, we prove that the covering condition holds.
    Fix any $p\in P$.
    The statement is trivial if $p\in \net_\alpha$ so we may assume that $t_p\le\alpha$.
    
    Let $v$ be the lowest ancestor of $p$ in $T$ such that $\rep(v)\in \net_\alpha$.
    Let $u$ be the ancestor of $p$ among the children of $v$.
    If $q = \rep(u)$ then $t_q = \frac{\rad(v)}{\e(1-2\e)}$.
    By our choice of $v$, $q\notin \net_\alpha$ and thus $t_q\le\alpha$.
    It follows that $\rad(v)\le \e(1-2\e)\alpha$.
%   
    % Since it is the lowest such ancestor, $\rad(v)\le \e(1-\e)\alpha$ for otherwise its children would all have representatives in $\net_\alpha$, contradicting our choice of $v$.
% 
    Thus, $\dist(p,\net_\alpha)\le \dist(p, \rep(v))\le \rad(v) \le \e(1-2\e)\alpha$.
    
    We now prove that the packing condition holds.
    Let $p,q$ be any two distinct points of $\net_\alpha$.
    Without loss of generality, assume $t_p\le t_q$.
    Thus, $q\notin P_{v_p}$, where (as before) $v_p$ is the least ancestor among the nodes of $T$ represented by $p$. % $v_p = \lca \{ u\in T: \rep(u)=p\}$. 
    % redundant definition of $v_p$.
    Since $p\in \net_\alpha$, $\alpha < t_p = \frac{\rad(\parent(v_p))}{\e(1-2\e)}$.
    Therefore, using the packing condition on the net-tree $T$, 
    % we can bound the distance between $p$ and $q$ as
    $\dist(p,q)\ge \kpack \rad(\parent(v_p)) > \kpack \e(1-2\e) \alpha$.
  \end{proof}

  A subset that satisfies this type of packing and covering conditions is sometimes referred to as a metric space net (not to be confused with a range space net) or, more accurately, as a Delone set~\cite{clarkson06building}.
  An example is given in Figure~\ref{fig:pack_cover}.

% section net_trees (end)
  \section{Topology-preserving sparsification} % (fold)
\label{sec:sparsification}

  In this section, we make the intuition of Figure~\ref{fig:points_on_a_line} concrete by showing that deleting a vertex $p$ (and its incident simplices) from the relaxed Vietoris--Rips complex $\rrips_{t_p}$ does not change the topology.

  For any $\alpha\ge 0$, we define the ``projection'' of $P$ onto $\net_\alpha$ as
  \begin{equation*}%\label{eq:projection_to_net}
    \pi_\alpha(p) := 
    \left\{
      \begin{array}{ll}
        p & \text{if $p\in \net_\alpha$}\\
        \displaystyle{\argmin_{q\in \net_\alpha} \rdist_\alpha(p,q)} & \text{otherwise.}
      \end{array}
    \right.
  \end{equation*}
  The following lemma shows that the distance from a point to its projection is bounded by the difference in the weights of the point and its projection.
  
  \begin{lemma}\label{lem:dist_point_to_projection}
    For all $p\in P$, $\dist(p, \pi_\alpha(p)) \le w_p(\alpha) - w_{\pi_\alpha(p)}(\alpha)$.
  \end{lemma}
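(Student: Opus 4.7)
The plan is to split on whether $p\in\net_\alpha$: when $p\in\net_\alpha$, the projection is $p$ itself and both sides of the inequality are zero, so the only real work is the case $t_p\le\alpha$, in which the top branch of the weight definition gives $w_p(\alpha)=\e\alpha$.

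For this main case, I would exhibit a single convenient witness $q\in\net_\alpha$ satisfying $\dist(p,q)+w_q(\alpha)\le w_p(\alpha)$, and then let the $\rdist_\alpha$-minimality of $\pi_\alpha(p)$ do the rest. The clever move is to invoke the covering condition of Lemma~\ref{lem:nets_pack_and_cover} not at scale $\alpha$ but at the slightly inflated scale $\alpha':=\alpha/(1-2\e)$. This produces a point $q\in\net_{\alpha'}\subseteq\net_\alpha$ with $\dist(p,q)\le\e(1-2\e)\alpha'=\e\alpha$; moreover $t_q>\alpha'=\alpha/(1-2\e)$ forces $\alpha<(1-2\e)t_q$, which is precisely the first branch of the weight definition, so $w_q(\alpha)=0$.

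Closing the loop is then a one-liner: expanding $\rdist_\alpha(p,\pi_\alpha(p))\le\rdist_\alpha(p,q)$ and cancelling the common $w_p(\alpha)$ gives
\[
\dist(p,\pi_\alpha(p))\le\dist(p,q)+w_q(\alpha)-w_{\pi_\alpha(p)}(\alpha)\le\e\alpha-w_{\pi_\alpha(p)}(\alpha)=w_p(\alpha)-w_{\pi_\alpha(p)}(\alpha).
\]

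The only subtle step will be the choice of inflation factor. Applying the covering at scale $\alpha$ directly gives $\dist(p,q)\le\e(1-2\e)\alpha$ but leaves $w_q(\alpha)$ possibly as large as $\e\alpha$, losing by a factor of roughly $2\e$. Rescaling by exactly $\tfrac{1}{1-2\e}$ is tight: it is just enough to push the threshold $(1-2\e)t_q$ past $\alpha$ (killing $w_q(\alpha)$) while the $(1-2\e)$ factor already sitting inside the covering bound cancels the inflation and keeps $\dist(p,q)\le\e\alpha=w_p(\alpha)$.
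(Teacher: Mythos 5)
Your proof is correct, and it follows the same overall template as the paper's: reduce to exhibiting a witness $q\in\net_\alpha$ with $\dist(p,q)+w_q(\alpha)\le w_p(\alpha)$, then transfer the bound to $\pi_\alpha(p)$ via the $\rdist_\alpha$-minimality of the projection. The interesting difference is in how the witness is produced. The paper goes back to the raw net-tree: it picks the ancestor $u$ of $p$ whose radius straddles a particular threshold, sets $q=\rep(u)$, and reads off both $\dist(p,q)\le\rad(u)$ and $w_q(\alpha)=0$ from the tree's covering/radius conditions. You instead reuse Lemma~\ref{lem:nets_pack_and_cover} as a black box, applying its covering condition at the inflated scale $\alpha'=\alpha/(1-2\e)$. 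This is a genuinely cleaner route: the inflation by exactly $1/(1-2\e)$ simultaneously cancels the $(1-2\e)$ factor sitting inside the covering bound (yielding $\dist(p,q)\le\e\alpha$ rather than $\e(1-2\e)\alpha$) and pushes $t_q$ above $\alpha/(1-2\e)$ so that $w_q(\alpha)=0$ falls out of the first branch of the weight definition. The modularity buys you something real --- the paper's inline net-tree argument is more fragile (it juggles the thresholds $\e\alpha/(1-\e)$ and $\e\alpha$ in a way that takes some care to reconcile), whereas your argument leans entirely on an already-proved lemma and a one-line cancellation. The net effect is the same bound, but your version makes it transparent why the factor $1/(1-2\e)$ is the right inflation constant.
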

  \begin{proof}
    Fix any $p\in P$.
    % If $p\in \net_\alpha$, then the statement is trivially true, so we may assume that $p\notin \net_\alpha$.
    We first prove that if $\dist(p, q) \le w_p(\alpha) - w_q(\alpha)$ for some $q\in \net_\alpha$, then it holds for $q = \pi_\alpha(p)$.
    % Suppose we have such a $q$.
    % \begin{align*}
    %   \dist(p, \pi_\alpha(p))
    %     &=   \rdist_\alpha(p,\pi_\alpha(p)) - w_p(\alpha) - w_{\pi_\alpha(p)}(\alpha) \because{by the definition of $\rdist_\alpha$}\\
    %     &\le \rdist_\alpha(p,q)  - w_p(\alpha) - w_{\pi_\alpha(p)}(\alpha) \because{by the definition of $\pi_\alpha$}\\
    %     &=   \dist(p,q) + w_q(\alpha) - w_{\pi_\alpha(p)}(\alpha) \because{by the definition of $\rdist_\alpha$}\\
    %     &\le w_p(\alpha) - w_{\pi_\alpha(p)}(\alpha). \because{$\dist(p,q)\le w_p(\alpha) - w_q(\alpha)$}
    % \end{align*}
    If we have such a $q$, then the definitions of $\rdist_\alpha$ and $\pi_\alpha$ imply the following.
    \begin{align*}      
      \dist(p, \pi_\alpha(p)) &=   \rdist_\alpha(p,\pi_\alpha(p)) - w_p(\alpha) - w_{\pi_\alpha(p)}(\alpha)\\ 
        &\le \rdist_\alpha(p,q)  - w_p(\alpha) - w_{\pi_\alpha(p)}(\alpha) \\
        &\le   \dist(p,q) + w_q(\alpha) - w_{\pi_\alpha(p)}(\alpha) \\
        &\le w_p(\alpha) - w_{\pi_\alpha(p)}(\alpha). 
    \end{align*}
    So, it will suffice to find a $q\in \net_\alpha$ such that $\dist(p,q)\le w_p(\alpha)-w_q(\alpha)$.
    If $p\in \net_\alpha$ then this is trivial.
    So we may assume $p\notin \net_\alpha$ and therefore $t_p\le\alpha$ and $w_p(\alpha) = \e\alpha$.
    
    Let $u\in T$ be the ancestor of $p$ such that $\rad(u) < \frac{\e \alpha}{1-\e}$ and $\rad(\parent(u))\ge \frac{\e \alpha}{1-\e}$.
    Let $q = \rep(u)$.
    Since 
    \[
      t_q\ge\frac{1}{\e(1-2\e)}\rad(\parent(u)) \ge \frac{\alpha}{1-2\e}, 
    \]
    it follows that $w_q(\alpha) = 0$ and that $q\in \net_\alpha$.
    Finally, since $p\in P_u$, $\dist(p,q) \le \rad(u) \le \e\alpha = w_p(\alpha) - w_q(\alpha)$.
  \end{proof}

  By bounding the distance between points and their projections, we can now show that distances in the projection do not grow.
  
  \begin{lemma}\label{lem:distances_dont_grow}
    For all $p,q\in P$ and all $\alpha\ge 0$, $\rdist_\alpha(\pi_\alpha(p),q)\le \rdist_\alpha(p,q)$.
  \end{lemma}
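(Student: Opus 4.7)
The plan is to reduce the statement to the triangle inequality plus the bound on $\dist(p,\pi_\alpha(p))$ that was just established in Lemma~\ref{lem:dist_point_to_projection}. The case $p\in\net_\alpha$ is trivial since then $\pi_\alpha(p)=p$, so assume $p\notin\net_\alpha$.

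First, I would expand the relaxed distance on the left using its definition:
\[
  \rdist_\alpha(\pi_\alpha(p),q) = \dist(\pi_\alpha(p),q) + w_{\pi_\alpha(p)}(\alpha) + w_q(\alpha).
\]
Then apply the ordinary triangle inequality in $(P,\dist)$ to the first term, giving
\[
  \dist(\pi_\alpha(p),q) \le \dist(p,\pi_\alpha(p)) + \dist(p,q).
\]

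Next, I would feed in Lemma~\ref{lem:dist_point_to_projection}, which gives $\dist(p,\pi_\alpha(p)) \le w_p(\alpha)-w_{\pi_\alpha(p)}(\alpha)$. Substituting and cancelling the $w_{\pi_\alpha(p)}(\alpha)$ terms yields
\[
  \rdist_\alpha(\pi_\alpha(p),q) \le w_p(\alpha) + \dist(p,q) + w_q(\alpha) = \rdist_\alpha(p,q),
\]
as desired.

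There is no real obstacle here; the entire content was packed into the previous lemma. The only thing to double-check is that the weight $w_{\pi_\alpha(p)}(\alpha)$ that is cancelled appears with the correct sign on both sides of the inequality — which it does, since the triangle-inequality step introduces $+w_{\pi_\alpha(p)}(\alpha)$ from the definition of $\rdist_\alpha$, while Lemma~\ref{lem:dist_point_to_projection} contributes $-w_{\pi_\alpha(p)}(\alpha)$. The structure of the argument is essentially a one-line chain of inequalities, so I would present it as a short \texttt{align*} block after disposing of the trivial case.
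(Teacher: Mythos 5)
Your argument is correct and is exactly what the paper has in mind: the paper's proof is literally the one-line remark that the bound ``follows from the definition of $\rdist_\alpha$, the triangle inequality, and Lemma~\ref{lem:dist_point_to_projection},'' and you have simply written out that chain of inequalities explicitly. The only cosmetic point is that the separate case $p\in\net_\alpha$ is unnecessary, since Lemma~\ref{lem:dist_point_to_projection} already holds there ($\pi_\alpha(p)=p$ gives $0\le 0$), so the single \texttt{align*} chain covers all cases uniformly.
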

  \begin{proof}
    The bound follows from the definition of $\rdist_\alpha$, the triangle inequality, and Lemma~\ref{lem:dist_point_to_projection}.
    % We derive the bound directly from the definition of $\rdist_\alpha$, the triangle inequality, and Lemma~\ref{lem:dist_point_to_projection}:
    % \begin{align*}
    %   \rdist_\alpha(\pi(p),q) 
    %     &= \dist(\pi(p),q) + w_{\pi_\alpha(p)}(\alpha) + w_q(\alpha) \because{by the definition of $\rdist_\alpha$}\\
    %     &\le \dist(p,q) + \dist(p,\pi_\alpha(p))  + w_{\pi_\alpha(p)}(\alpha) + w_q(\alpha) \because{by the triangle inequality}\\
    %     &\le \rdist_\alpha(p,q) + \dist(p,\pi_\alpha(p))  + w_{\pi_\alpha(p)}(\alpha) - w_p(\alpha) \because{by the definition of $\rdist_\alpha$}\\
    %     &\le \rdist_\alpha(p,q). \because{by Lemma~\ref{lem:dist_point_to_projection}}
    % \end{align*}
  \end{proof}

  \begin{lemma}\label{lem:homology_isomorphic_sparsification}
    Let $\alpha\ge 0$ be a fixed constant.
    Let $X$ be a set of points such that $\net_\alpha\subseteq X \subseteq P$ and let $K = \VR(X, \rdist_\alpha, \alpha)$.  
    The inclusion map $i:\srips_\alpha \hookrightarrow K$ induces an isomorphism at the homology level.
  \end{lemma}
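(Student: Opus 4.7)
The plan is to invoke Lemma~\ref{lem:induced_isomorphism} using the map $\pi_\alpha$ (restricted to the vertex set $X$ of $K$) as a simplicial retraction from $K$ onto $\srips_\alpha$. Since $\pi_\alpha$ fixes every point of $\net_\alpha$ by definition, it will automatically be a retraction at the vertex level, so the work is concentrated in (a) checking it extends to a simplicial map, and (b) checking that $i\circ \pi_\alpha$ is contiguous to $\id_K$.

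First I would verify that $\pi_\alpha$ is simplicial as a map $K \to \srips_\alpha$. Given a simplex $\sigma \in K$ and any $p,q\in\sigma$, we have $\rdist_\alpha(p,q)\le \alpha$. Two applications of Lemma~\ref{lem:distances_dont_grow} give
\[
  \rdist_\alpha(\pi_\alpha(p),\pi_\alpha(q)) \le \rdist_\alpha(p,\pi_\alpha(q)) \le \rdist_\alpha(p,q) \le \alpha,
\]
so $\pi_\alpha(\sigma) \subseteq \net_\alpha$ is a simplex of $\srips_\alpha = \VR(\net_\alpha,\rdist_\alpha,\alpha)$. Because $\pi_\alpha$ restricts to the identity on $\net_\alpha$, this simplicial map is a retraction in the sense of Definition~\ref{def:retraction}.

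Next I would check contiguity of $i\circ \pi_\alpha$ with $\id_K$: for every $\sigma\in K$, I need $\sigma \cup \pi_\alpha(\sigma) \in K$. The vertex set is contained in $X$ since $\sigma\subseteq X$ and $\pi_\alpha(\sigma) \subseteq \net_\alpha \subseteq X$ by hypothesis. For the $\rdist_\alpha$-diameter condition, pairs inside $\sigma$ are already at distance at most $\alpha$, and mixed pairs $(p, \pi_\alpha(q))$ as well as projected pairs $(\pi_\alpha(p), \pi_\alpha(q))$ are controlled by the same one- or two-fold application of Lemma~\ref{lem:distances_dont_grow} used above. Thus $\sigma\cup\pi_\alpha(\sigma)\in K$, verifying contiguity, and Lemma~\ref{lem:induced_isomorphism} then delivers the isomorphism $i_\star : \Hom(\srips_\alpha) \to \Hom(K)$. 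The main conceptual work has already been absorbed into Lemma~\ref{lem:distances_dont_grow}, so there is no real obstacle here beyond carefully checking that the projection both lands in the right complex and does not push any simplex out of $K$.
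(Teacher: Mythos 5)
Your proof is correct and follows essentially the same strategy as the paper: retract via $\pi_\alpha$, verify it is simplicial via two applications of Lemma~\ref{lem:distances_dont_grow}, and check contiguity of $i\circ\pi_\alpha$ with $\id_K$. The only cosmetic difference is that you dispose of the ``vertical'' edges $\{p,\pi_\alpha(p)\}$ by a degenerate application of Lemma~\ref{lem:distances_dont_grow} (take $q=p$, noting $\rdist_\alpha(p,p)=2w_p(\alpha)\le\alpha$), whereas the paper invokes Lemma~\ref{lem:dist_point_to_projection} for those edges.
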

  \begin{proof}
    The map $K\to \srips_\alpha$ induced by $\pi_\alpha$ is a retraction because $\srips_\alpha\subseteq K$ and $\pi_\alpha$ is a retraction onto $\net_\alpha$, the vertex set of $\srips_\alpha$.
    By Lemma~\ref{lem:induced_isomorphism}, it will suffice to prove that $\pi_\alpha$ is simplicial and that $i \circ \pi_\alpha$ is contiguous to the identity map on $K$.
    % \begin{enumerate*}
    %   \item $\pi_\alpha$ is simplicial: for all $\sigma\in K$, $\pi_\alpha(\sigma)\in \srips_\alpha$, and  
    %   \item $i \circ \pi_\alpha$ and $\id_{\rrips_\alpha}$ are contiguous: for all $\sigma\in\rrips_\alpha$, $\pi_\alpha(\sigma)\cup \sigma \in \rrips_\alpha$.
    % \end{enumerate*}
    Since $\srips_\alpha$ and $K$ are Vietoris--Rips complexes, it will suffice to prove these facts for the edges:
    \begin{enumerate*}
      \item $\pi_\alpha$ is simplicial: for all $p,q\in X$, if $\rdist_\alpha(p,q)\le \alpha$ then $\rdist_\alpha(\pi_\alpha(p), \pi_\alpha(q)) \le \alpha$, and  
      \item $i \circ \pi_\alpha$ and $\id_K$ are contiguous: for all $p,q\in X$, if $\rdist_\alpha(p,q)\le \alpha$ then all six edges of the tetrahedron $\{p,q,\pi_\alpha(p),\pi_\alpha(q)\}$ are in $K$.
    \end{enumerate*}
    The first statement follows from two successive applications of Lemma~\ref{lem:distances_dont_grow}.
    The second statement follows from Lemma~\ref{lem:dist_point_to_projection} for the edges $\{p,\pi_\alpha(p)\}$ and $\{q,\pi_\alpha(q)\}$ 
    and from Lemma~\ref{lem:distances_dont_grow} for the other edges.
  \end{proof}

  % \begin{lemma}\label{lem:sparse_equals_relaxed}
  %   For all $a_k\in A$ the inclusions $f:\cl\srips_\alpha\hookrightarrow \rrips_\alpha$, $g:\cl\ssrips_\alpha\hookrightarrow \rrips_\alpha$, and $h:\cl\srips_\alpha\hookrightarrow \cl\ssrips_\alpha$ induce isomorphisms at the homology level.
  % \end{lemma}
  % \begin{proof}
  %   The given maps yield the following commutative diagrams.
  %   \begin{diagram}[size=20pt]
  %     \cl\srips_\alpha  && \rInto^f && \rrips_\alpha & &
  %         \Hom(\cl\srips_\alpha)  && \rTo^{f_\star} && \Hom(\rrips_\alpha) \\
  %     & \rdInto_h & & \ruInto_g & & &
  %         & \rdTo_{h_\star} & & \ruTo_{g_\star} & \\
  %     & & \cl\ssrips_\alpha & & & &
  %         & & \Hom(\cl\ssrips_\alpha) & &
  %   \end{diagram}
  %   Lemma~\ref{lem:homology_isomorphic_sparsification} implies that $f_\star$ is an isomorphism.
  %   Since $f = g\circ h$, we have $f_\star = g_\star \circ h_\star$ by functoriality.
  %   Since $f_\star$ is an isomorphism, it follows that $g_\star$ and $h_\star$ are also isomorphisms as desired.
  % \end{proof}
  
  \begin{corollary}\label{cor:back_arrows_are_isomorphisms}
    For all $\alpha\in A$, the inclusions $f:\srips_{\alpha} \hookrightarrow \cl\srips_{\alpha}$, $g:\srips_{\alpha} \hookrightarrow \rrips_{\alpha}$, and $h:\cl\srips_{\alpha} \hookrightarrow \rrips_{\alpha}$  induce isomorphisms at the homology level.
  \end{corollary}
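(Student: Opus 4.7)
The plan is to reduce all three claims to \textbf{Lemma~\ref{lem:homology_isomorphic_sparsification}}, which already delivers exactly this kind of isomorphism for any vertex set sandwiched between $\net_\alpha$ and $P$. The two direct applications handle $g$ and $f$; the third inclusion $h$ then follows by a two-out-of-three argument on the commuting triangle.

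First I would dispatch $g$. Take $X = P$, so by definition $\VR(P,\rdist_\alpha,\alpha) = \rrips_\alpha$, and trivially $\net_\alpha \subseteq P$. Lemma~\ref{lem:homology_isomorphic_sparsification} then says exactly that $g_\star : \Hom(\srips_\alpha) \to \Hom(\rrips_\alpha)$ is an isomorphism.

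Next I would dispatch $f$. Take $X = \cl\net_\alpha$. Since $\net_\alpha = \{p : t_p > \alpha\} \subseteq \{p : t_p \ge \alpha\} = \cl\net_\alpha \subseteq P$, the hypotheses of the lemma hold, and by definition $\VR(\cl\net_\alpha,\rdist_\alpha,\alpha) = \cl\srips_\alpha$. So the lemma gives that $f_\star : \Hom(\srips_\alpha) \to \Hom(\cl\srips_\alpha)$ is an isomorphism.

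Finally, for $h$, note that $g = h \circ f$ as inclusion maps between the three nested complexes $\srips_\alpha \subseteq \cl\srips_\alpha \subseteq \rrips_\alpha$. By functoriality of homology, $g_\star = h_\star \circ f_\star$. Since both $g_\star$ and $f_\star$ are isomorphisms by the previous two steps, $h_\star = g_\star \circ f_\star^{-1}$ is an isomorphism as well. There is no real obstacle here: Lemma~\ref{lem:homology_isomorphic_sparsification} is doing all the heavy lifting, and the corollary is essentially just a matter of instantiating its parameter $X$ correctly in two ways and then closing the triangle by composition.
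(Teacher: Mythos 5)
Your proof is correct and matches the paper's own argument step for step: both instantiate Lemma~\ref{lem:homology_isomorphic_sparsification} with $X=\cl\net_\alpha$ and $X=P$ to handle $f$ and $g$, then close the triangle via $g=h\circ f$ and functoriality to conclude $h_\star = g_\star\circ f_\star^{-1}$ is an isomorphism.
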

  \begin{proof}
    The inclusions $f$ and $g$ induce isomorphisms by applying Lemma~\ref{lem:homology_isomorphic_sparsification} with $X = \cl\net_\alpha$ and $X = P$ respectively.
    Composing the inclusions, we get that $g = h\circ f$.
    Thus, at the homology level, we get $h_\star = g_\star \circ f_\star^{-1}$ is also an isomorphism.
    % Thus, $h_\star$ is an isomorphism because $f_\star$ and $g_\star$ are.
  \end{proof}

% section sparsification (end)
  
  \section{Straightening out the Zigzags}\label{sec:straightening_zigzags}
  In this section, we show two different ways in which zigzag persistence may be avoided.
  First, in Subsection~\ref{sec:reversing_zigzags}, we show that the sparse zigzag filtration $\srips$ does not zigzag at the homology level.
  Then, in Subsection~\ref{sec:no_zigzag}, we show how to modify the zigzag filtration so it does not zigzag as a filtration either.

  The advantage of the non-zigzagging filtration is that it allows one to use the standard persistence algorithm,
  but it has larger size in the intermediate complexes.
  As we will see in Subsection~\ref{sec:linear_complexity}, the total size is still linear.

    \subsection{Reversing Homology Isomorphisms} % (fold)
\label{sec:reversing_zigzags}

  The backwards arrows in the zigzag filtration $\srips$ all induce isomorphisms.
  At the homology level, these isomorphisms can be replaced by their inverses to give a persistence module that does not zigzag.
  That is, the zigzag module
  \[
    \cdots \to
    \Hom(\cl\srips_\alpha) \overset{\cong}{\leftarrow}
    \Hom(\srips_\alpha) \to
    \Hom(\cl\srips_\beta) \overset{\cong}{\leftarrow}
    \Hom(\srips_\beta) \to \cdots
  \]
  can be transformed into
  \[
    \cdots \to
    \Hom(\cl\srips_\alpha) \overset{\cong}{\to}
    \Hom(\srips_\alpha) \to
    \Hom(\cl\srips_\beta) \overset{\cong}{\to}
    \Hom(\srips_\beta) \to \cdots.
  \]
  The latter module implies the existence of another that only uses the closed sparse Vietoris--Rips complexes:
  \[
    \cdots \to
    \Hom(\cl\srips_\alpha) \to
    \Hom(\cl\srips_\beta) \to \cdots.
  \]
  Note that this module does not duplicate the indices in the zigzag.
  In these various transformations, we have only reversed or concatenated isomorphisms, thus we have not changed the rank of any induced map $\Hom(\cl\srips_\alpha) \to \Hom(\cl\srips_\beta)$.
  As a result the persistence diagram $\dgk\srips$ is unchanged.

  This is novel in that we construct a zigzag filtration and we apply the zigzag persistence algorithm, but we are really computing the diagram of a persistence module that does not zigzag.
  The zigzagging can then be interpreted as sparsifying the complex without changing the topology.

% section reversing_zigzags (end)
    \subsection{A Sparse Filtration without the Zigzag} % (fold)
\label{sec:no_zigzag}

  The preceding subsection showed that the sparse zigzag Vietoris--Rips filtration does not zigzag as a persistence module.
  This hints that it is possible to construct a filtration that does not zigzag with the same persistence diagram.
  Indeed, this is possible using the filtration 
  \[
    \ssrips := \{\ssrips_{a_k}\}_{a_k\in A},
    \text{~~where~~}
    \ssrips_{a_k} := \bigcup_{i = 1}^k\cl\srips_{a_i}.
  \]

  We first prove that $\Hom(\cl\srips_{a_k})$ and $\Hom(\ssrips_{a_k})$ are isomorphic.

  \begin{lemma}\label{lem:srips_in_ssrips_isomorphism}
    For all $a_k \in A$, the inclusion $h:\cl\srips_{a_k}\hookrightarrow \ssrips_{a_k}$ induces a homology isomorphism.
  \end{lemma}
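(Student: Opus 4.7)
The plan is to build $\ssrips_{a_k}$ up from $\cl\srips_{a_k}$ one vertex at a time, showing that each single-vertex extension induces a homology isomorphism via Lemma~\ref{lem:induced_isomorphism}. Equivalently, I peel off the ``extra'' vertices of $\ssrips_{a_k}$ in order of increasing deletion time.

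Concretely, I would enumerate the points in $V(\ssrips_{a_k})\setminus V(\cl\srips_{a_k})$ in order of increasing deletion time as $p_1,\ldots,p_m$. Set $K_0 := \ssrips_{a_k}$, and for $i\ge 1$ let $K_i$ be the subcomplex of $K_{i-1}$ obtained by discarding every simplex that contains $p_i$. A short check using Lemma~\ref{lem:rdist_induces_filtrations} shows that $K_m = \cl\srips_{a_k}$: any simplex of $\ssrips_{a_k}$ whose vertices all have deletion time at least $a_k$ lies in $\cl\srips_{a_j}$ for some $j\le k$, and hence, by Lemma~\ref{lem:rdist_induces_filtrations}, in $\cl\srips_{a_k}$. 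The goal then reduces to showing each inclusion $K_i\hookrightarrow K_{i-1}$ induces a homology isomorphism, and composing.

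For the individual step, I apply Lemma~\ref{lem:induced_isomorphism} with the simplicial retraction $r_i:K_{i-1}\to K_i$ that fixes every vertex other than $p_i$ and sends $p_i$ to $\pi_{t_{p_i}}(p_i)\in\net_{t_{p_i}}$. The key observation is that since $p_1,\ldots,p_{i-1}$ have already been stripped away, every remaining vertex of $K_{i-1}$ has deletion time at least $t_{p_i}$; consequently, any simplex $\sigma\in K_{i-1}$ containing $p_i$ satisfies $\sigma\subseteq\cl\net_{t_{p_i}}$, and by Lemma~\ref{lem:rdist_induces_filtrations} it actually lies in $\cl\srips_{t_{p_i}}$. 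Then at scale $t_{p_i}$, exactly as in the proof of Lemma~\ref{lem:homology_isomorphic_sparsification}, Lemma~\ref{lem:distances_dont_grow} gives $\rdist_{t_{p_i}}(\pi_{t_{p_i}}(p_i),v)\le\rdist_{t_{p_i}}(p_i,v)\le t_{p_i}$ for every $v\in\sigma$. This shows both that $r_i(\sigma)\in\cl\srips_{t_{p_i}}\subseteq K_i$ (using that $\pi_{t_{p_i}}(p_i)$ has deletion time strictly greater than $t_{p_i}$, so $r_i(\sigma)$ avoids $p_1,\ldots,p_i$) and that $\sigma\cup r_i(\sigma)\in\cl\srips_{t_{p_i}}\subseteq K_{i-1}$, which supplies the required contiguity.

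The main obstacle is choosing the right scale at which to project. A single global projection $\pi_{a_k}$ on $\ssrips_{a_k}$ is \emph{not} contiguous to the identity: a simplex $\sigma$ that first entered the filtration at some very small scale $a_j$ together with its $\pi_{a_k}$-image need not fit into any one $\cl\srips_{a_l}$ with $l\le k$, since Lemma~\ref{lem:distances_dont_grow} only controls $\rdist$ at the scale of projection. Removing vertices one at a time at their individual deletion scales sidesteps this, because every local contiguity check then takes place precisely at $t_{p_i}$, where $\sigma\cup r_i(\sigma)$ genuinely lives as a simplex of some $\cl\srips_{a_l}$.
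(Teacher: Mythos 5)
Your proof is correct, and it takes a genuinely different decomposition from the paper's. The paper interpolates between $\cl\srips_{a_k}$ and $\ssrips_{a_k}$ by \emph{scale}: it sets $T_{i,k} := \bigcup_{j=i}^k \cl\srips_{a_j}$, factors $h$ as $T_{k,k}\hookrightarrow T_{k-1,k}\hookrightarrow\cdots\hookrightarrow T_{1,k}$, and uses the projection $\pi_{a_i}$ (which may move several points simultaneously) as the retraction $T_{i,k}\to T_{i+1,k}$. You instead interpolate \emph{vertex by vertex}, stripping the points $p_1,\ldots,p_m$ of $V(\ssrips_{a_k})\setminus V(\cl\srips_{a_k})$ in increasing order of deletion time and retracting each $p_i$ to $\pi_{t_{p_i}}(p_i)$ while fixing everything else. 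Both arguments hinge on the same ingredients: Lemma~\ref{lem:rdist_induces_filtrations} to push the simplex up to the scale where the projection is controlled, Lemma~\ref{lem:distances_dont_grow} and Lemma~\ref{lem:dist_point_to_projection} for simpliciality and contiguity, and Lemma~\ref{lem:induced_isomorphism} to convert contiguity into a homology isomorphism. Your refinement buys finer-grained steps and makes explicit, as you note, that the contiguity check must happen at a \emph{local} scale---here $t_{p_i}$---and the ordering by deletion time is precisely what guarantees every simplex of $K_{i-1}$ through $p_i$ has all its vertices in $\cl\net_{t_{p_i}}$, hence lies in $\cl\srips_{t_{p_i}}$. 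The paper's decomposition is a bit more economical since the intermediate complexes are clean unions rather than subtractions. One small phrasing slip: the displayed inclusions $\cl\srips_{t_{p_i}}\subseteq K_i$ and $\cl\srips_{t_{p_i}}\subseteq K_{i-1}$ are not literally true (the singleton $\{p_i\}$ lies in $\cl\srips_{t_{p_i}}$ but not in $K_i$, and ties in deletion time can defeat the second). What you actually need---and what your parenthetical correctly supplies---is that the specific simplices $r_i(\sigma)$ and $\sigma\cup r_i(\sigma)$ avoid $\{p_1,\ldots,p_i\}$ and $\{p_1,\ldots,p_{i-1}\}$ respectively, and belong to $\cl\srips_{t_{p_i}}\subseteq\ssrips_{a_k}$, hence to $K_i$ and $K_{i-1}$. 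The substance is right; only the shorthand containments are off.
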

  \begin{proof}
    We define some intermediate complexes that interpolate between $\cl\srips_{a_k}$ and $\ssrips_{a_k}$.
    \[
      T_{i,k} := \bigcup_{j=i}^k \cl\srips_{a_j}.
    \]
    In particular, we have that
    % \[
      $\cl\srips_{a_k} = T_{k,k} \text{ and } \ssrips_{a_k} = T_{1,k}$.
    % \]
    The map $h$ can be expressed as 
    $h = h_1\circ\cdots\circ h_{k-1}$, where
    $h_i:T_{i+1,k}\hookrightarrow T_{i,k}$ is an inclusion.
    It will suffice to prove that each $h_i$ induces an isomorphism at the homology level for each $i=1\ldots k-1$.
    By Lemma~\ref{lem:induced_isomorphism}, it will suffice to show that the projection $\pi_{a_i}:T_{i,k}\to T_{i+1,k}$ is a simplicial retraction and $h_i\circ \pi_{a_i}$ and $\id_{T_{i,k}}$ are contiguous.

    Let $\sigma\in T_{i,k}$ be any simplex.
    So, $\sigma\in \cl\srips_{a_j}$ for some integer $j$ such that $i\le j\le k$.
    
    First, we prove that $\pi_{a_i}$ is a retraction.
    If $\sigma\in T_{i+1,k}$ then $j\ge i+1$.
    So, $\sigma\subseteq \cl\net_{a_j} \subseteq \net_{a_i}$ and thus $\pi_{a_i}(\sigma) = \sigma$ because $\pi_{a_i}$ is a retraction onto $\net_{a_i}$ by definition when viewed as a function on the vertex sets.
    
    Second, we show that $\pi_{a_i}$ is a simplicial map from $T_{i,k}$ to $T_{i+1,k}$.
    Since it is a retraction, it only remains to show that $\pi_{a_i}(\sigma)\in T_{i,k}$ when $\sigma\in T_{i,k}\setminus T_{i+1,k}$, i.e. when $j=i$.
    In this case, $\pi_{a_i}(\sigma) \in \srips_{a_i}$ because $\pi_{a_i}: \cl\srips_{a_i}\to\srips_{a_i}$ is simplicial (as shown in the proof of Lemma~\ref{lem:homology_isomorphic_sparsification}).
    Since $\srips_{a_i}\subseteq \cl\srips_{a_{i+1}}\subseteq T_{i,k}$, it follows that $\pi_{a_i}(\sigma)\in T_{i,k}$ as desired.
    
    Last, we prove contiguity.
    We need to prove that $\sigma \cup \pi_{a_i}(\sigma)\in T_{i,k}$.
    If $j > i$, then $\sigma \cup \pi_{a_i}(\sigma) = \sigma \in T_{i,k}$ as desired.
    If $i = j$, then $\sigma \cup \pi_{a_i}(\sigma)\in \cl\srips_{a_i}$ as shown in the proof of Lemma~\ref{lem:homology_isomorphic_sparsification}.
    Since $\cl\srips_{a_i} \subseteq T_{i,k}$, it follows that $\sigma \cup \pi_{a_i}(\sigma) \in T_{i,k}$ as desired.
  \end{proof}

  \begin{theorem}\label{thm:Dssrips_Dsrips}
    The persistence diagrams of $\srips$ and $\ssrips$ are identical.
  \end{theorem}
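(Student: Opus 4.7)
The plan is to apply the Persistence Equivalence Theorem (Theorem~\ref{thm:persistence_equivalence}) to the persistence module for $\ssrips$ and to the non-zigzagging persistence module constructed from $\srips$ in Subsection~\ref{sec:reversing_zigzags}, using Lemma~\ref{lem:srips_in_ssrips_isomorphism} to supply vertical isomorphisms.

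First, I would recall from Subsection~\ref{sec:reversing_zigzags} that $\dgk\srips$ equals the persistence diagram of the straightened module
\[
\Hom(\cl\srips_{a_1}) \to \Hom(\cl\srips_{a_2}) \to \cdots,
\]
where each horizontal map is the composition $(j_k)_\star \circ (i_k)_\star^{-1}$ with $i_k \colon \srips_{a_k} \hookrightarrow \cl\srips_{a_k}$ and $j_k \colon \srips_{a_k} \hookrightarrow \cl\srips_{a_{k+1}}$ the simplicial inclusions from the zigzag (and $(i_k)_\star$ is invertible by Corollary~\ref{cor:back_arrows_are_isomorphisms}). The goal is then to place this module below the module
\[
\Hom(\ssrips_{a_1}) \to \Hom(\ssrips_{a_2}) \to \cdots
\]
induced by the inclusions $\iota_k \colon \ssrips_{a_k} \hookrightarrow \ssrips_{a_{k+1}}$, with vertical arrows $(h_k)_\star$ coming from the simplicial inclusions $h_k \colon \cl\srips_{a_k} \hookrightarrow \ssrips_{a_k}$. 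By Lemma~\ref{lem:srips_in_ssrips_isomorphism}, each vertical map is an isomorphism, which is precisely the hypothesis needed for Theorem~\ref{thm:persistence_equivalence}.

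The only remaining obligation is commutativity of each square. Pre-composing both paths with the isomorphism $(i_k)_\star$ reduces the commutativity condition to
\[
(h_{k+1})_\star \circ (j_k)_\star \;=\; (\iota_k)_\star \circ (h_k)_\star \circ (i_k)_\star.
\]
By functoriality of the homology operator, the left-hand side is the homology map induced by the simplicial inclusion $\srips_{a_k} \hookrightarrow \cl\srips_{a_{k+1}} \hookrightarrow \ssrips_{a_{k+1}}$, while the right-hand side is induced by $\srips_{a_k} \hookrightarrow \cl\srips_{a_k} \hookrightarrow \ssrips_{a_k} \hookrightarrow \ssrips_{a_{k+1}}$. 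Both are the single inclusion $\srips_{a_k} \hookrightarrow \ssrips_{a_{k+1}}$, so the two homomorphisms are literally equal, not merely equal at the homology level.

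The main obstacle, such as it is, lies in correctly handling the inverse isomorphism $(i_k)_\star^{-1}$ in the straightened module; once one cancels it by multiplying by $(i_k)_\star$, commutativity is a direct application of functoriality. Assembling the ladder and invoking Theorem~\ref{thm:persistence_equivalence} then yields $\dgk\ssrips = \dgk\srips$, completing the proof.
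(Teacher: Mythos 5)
Your proposal is correct and follows essentially the same route as the paper: straighten the zigzag module for $\srips$ using Corollary~\ref{cor:back_arrows_are_isomorphisms}, use Lemma~\ref{lem:srips_in_ssrips_isomorphism} for the vertical isomorphisms into the $\ssrips$ module, verify the squares commute via functoriality of inclusions, and invoke the Persistence Equivalence Theorem. Your treatment is slightly more explicit than the paper's in spelling out the cancellation of $(i_k)_\star^{-1}$ when checking commutativity, but the argument is the same.
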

  \begin{proof}
    For any $a_i, a_{i+1} \in A$, we get the following commutative diagram where all maps are induced by inclusions.
    \[
      \xymatrix@C=0pt{
        {\Hom(\cl\srips_{a_i})} \ar[dr]_\cong && {\Hom(\srips_{a_i})} \ar[ll]_\cong \ar[dl] \ar[rr] && {\Hom(\cl\srips_{a_{i+1}})} \ar[dl]_\cong\\
        & {\Hom(\ssrips_{a_i})} \ar[rr]      && {\Hom(\ssrips_{a_{i+1}})} &
      }
    \]
        
    Lemma~\ref{lem:srips_in_ssrips_isomorphism} and Corollary~\ref{cor:back_arrows_are_isomorphisms} show that the indicated maps are isomorphisms.
    As in Section~\ref{sec:reversing_zigzags}, we reverse the isomorphism $\Hom(\srips_{a_i}) \to \Hom(\cl\srips_{a_i})$ to get the following diagrams, which also commutes.
    \[
      \xymatrix{
        \Hom(\cl\srips_{a_i}) \ar[d]_\cong \ar[r] & \Hom(\cl\srips_{a_{i+1}}) \ar[d]_\cong\\
        \Hom(\ssrips_{a_i}) \ar[r] & \Hom(\ssrips_{a_{i+1}}) 
      }
    \]
    Therefore, by the Persistence Equivalence Theorem, $\dgk\srips = \dgk\ssrips$.
  \end{proof}

% section no_zigzag (end)
    \subsection{The Connection with Extended Persistence} % (fold)
\label{sub:levelset_zigzag}

  The sparse Rips zigzag has the property that every other space is the intersection of its neighbors on either side.
  That is, by a simple exercise, one can show that $\srips_{a_i} = \cl\srips_{a_i}\cap\cl\srips_{a_{i+1}}$.
  Carlsson et al.\ give a general method for comparing such zigzags to filtrations that do not zigzag in their work on levelset zigzags induced by real-valued functions~\cite{carlsson09zigzag}.
  They proved that the Mayer-Vietoris diamond principle from a paper by Carlsson and De Silva~\cite{carlsson10zigzag} allows one to relate the persistence diagram of such a zigzag with the so-called extended persistence diagram of the union filtration.
  In our case, this result implies that the persistence diagram of the sparse zigzag Vietoris-Rips filtration can be derived from the persistence diagram of the extended filtration
  \[
    \xymatrix{
      {\Hom(T_{1,1})} \ar[r] &
      % {\Hom(T_{1,2})} \ar[r] &
      {\cdots} \ar[r] &
      {\Hom(T_{1,N})} \ar[d] \\
      {\Hom(T_{1,N}, T_{N,N})} &
      {\cdots} \ar[l] & 
      % {\Hom(T_{N,N}, T_{2,N})} \ar[l] &
      {\Hom(T_{N,N}, T_{1,N})} \ar[l] 
    }
  \]
  where $N = |A|$, $T_{i,k} := \bigcup_{j=i}^k \cl\srips_{a_j}$ as in the proof of Lemma~\ref{lem:srips_in_ssrips_isomorphism}, and $\Hom(T_{N,N}, T_{i,N})$ is the homology of $T_{N,N}$ relative to $T_{i,N}$, or equivalently, the homology of the quotient $T_{N,N}/ T_{i,N}$.
  The first half of this filtration is precisely the sparse Vietoris-Rips filtration $\ssrips$.
  
  In light of this result, we see that Lemma~\ref{lem:srips_in_ssrips_isomorphism} implies that there is nothing interesting happening in the second half of the extended filtration.
  In the language of extended persistence, this means that there are no extended or relative pairs.
  Thus, as shown in Theorem~\ref{thm:Dssrips_Dsrips}, there is no need to compute the extended persistence to compute the persistence of the sparse Vietoris-Rips filtration.

% subsection levelset_zigzag (end)

  \section{Theoretical Guarantees}\label{sec:theoretical_guarantees}
  There are two main theoretical guarantees regarding the sparse Vietoris--Rips filtrations.
  First, in Subsection~\ref{sec:approximation_guarantee}, we show that the resulting persistence diagrams are good approximations to the true Vietoris--Rips filtration.
  Second, in Subsection~\ref{sec:linear_complexity}, we show that the filtrations have linear size.

    \subsection{The Approximation Guarantee} % (fold)
\label{sec:approximation_guarantee}

  In this subsection we prove that the persistence diagram of the sparse Vietoris--Rips filtration is a multiplicative $c$-approximation to the persistence diagram of the standard Vietoris--Rips filtration, where $c = \frac{1}{1-2\e}$.
  The approach has two parts.
  First, we show that the relaxed filtration is a multiplicative $c$-approximation to the classical Vietoris--Rips filtration.
  Second, we show that the sparse and relaxed Vietoris--Rips filtrations have the same persistence diagrams, i.e.\ that $\dgk \srips = \dgk \rrips$.
  By passing through the filtration $\rrips$, we obviate the need to develop new stability results for zigzag persistence.
  % The sparse zigzag filtration is embedded in a classical filtration of the same homotopy type.
  
  \begin{theorem}\label{thm:approximation_guarantee}
    For any metric space $\M = (P,\dist)$, the persistence diagrams of the corresponding sparse Vietoris--Rips filtrations $\srips = \srips(\M)$ and $\ssrips = \ssrips(\M)$ both yield $c$-approximations to the persistence diagram of the Vietoris--Rips filtration $\rips=\rips(\M)$, where $c = \frac{1}{1-2\e}$ and $\e\le\frac{1}{3}$ is a user-defined constant.
  \end{theorem}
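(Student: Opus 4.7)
The plan is to follow the two-step strategy announced just before the theorem, and to handle $\srips$ by piggybacking on the result for $\ssrips$ via Theorem~\ref{thm:Dssrips_Dsrips}.

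First, I would dispatch the comparison between $\rips$ and $\rrips$. Lemma~\ref{lem:rips_relaxed_rips_interleave} gives exactly the interleaving $\rips_{\alpha/c}\subseteq \rrips_\alpha \subseteq \rips_\alpha$ required by the Persistence Approximation Lemma (Lemma~\ref{lem:persistence_approximation}), with $c = 1/(1-2\e)$. Applying that lemma immediately yields that $\dgk\rrips$ is a $c$-approximation to $\dgk\rips$. So if I can show $\dgk\ssrips = \dgk\rrips$ and $\dgk\srips = \dgk\rrips$, the theorem follows by transitivity.

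The second step is to show $\dgk\ssrips = \dgk\rrips$. For each $a_k\in A$, the three complexes fit together as $\cl\srips_{a_k}\subseteq \ssrips_{a_k}\subseteq \rrips_{a_k}$. Corollary~\ref{cor:back_arrows_are_isomorphisms} states that the outer inclusion $\cl\srips_{a_k}\hookrightarrow \rrips_{a_k}$ induces a homology isomorphism, and Lemma~\ref{lem:srips_in_ssrips_isomorphism} states the same for $\cl\srips_{a_k}\hookrightarrow \ssrips_{a_k}$. A standard two-out-of-three argument then forces the middle inclusion $\ssrips_{a_k}\hookrightarrow \rrips_{a_k}$ to induce an isomorphism as well. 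Since all horizontal and vertical maps in the ladder
\[
  \xymatrix{
    \Hom(\ssrips_{a_k}) \ar[r] \ar[d]_\cong & \Hom(\ssrips_{a_{k+1}}) \ar[d]_\cong \\
    \Hom(\rrips_{a_k}) \ar[r] & \Hom(\rrips_{a_{k+1}})
  }
\]
are induced by inclusions, each square commutes by functoriality of $\Hom$. The Persistence Equivalence Theorem (Theorem~\ref{thm:persistence_equivalence}) then yields $\dgk\ssrips = \dgk\rrips$. Since $\rrips$ and $\ssrips$ only change at scales in $A$ we may restrict attention to $A$ without loss.

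For $\srips$, Theorem~\ref{thm:Dssrips_Dsrips} already gives $\dgk\srips = \dgk\ssrips$, so we are done. Chaining the three equalities/approximations yields that both $\dgk\srips$ and $\dgk\ssrips$ are $c$-approximations of $\dgk\rips$. The main subtlety — and in my view the only place where one has to be a bit careful — is to verify that the isomorphisms in the ladder for step two really are induced by the correct inclusions, so that the ladder commutes strictly rather than merely up to homotopy; but since every map in sight is a simplicial inclusion, commutativity is automatic and the verification reduces to bookkeeping.
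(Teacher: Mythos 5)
Your proposal is correct, and it takes the same high-level strategy as the paper (interleave $\rips$ with $\rrips$, then show the sparse diagrams equal $\dgk\rrips$), but it realizes the second step by a different route. The paper proves $\dgk\srips = \dgk\rrips$ \emph{directly}: it draws the zigzag ladder with $\Hom(\cl\srips_{a_i}) \leftarrow \Hom(\srips_{a_i}) \to \Hom(\cl\srips_{a_{i+1}})$ on top and $\Hom(\rrips_{a_i}) \to \Hom(\rrips_{a_{i+1}})$ on the bottom, notes via Corollary~\ref{cor:back_arrows_are_isomorphisms} that the backward and diagonal arrows are isomorphisms, and then reverses the backward arrows to obtain a commutative square of non-zigzag modules over $\cl\srips$ and $\rrips$; the $\ssrips$ case then follows from Theorem~\ref{thm:Dssrips_Dsrips}. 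You instead prove $\dgk\ssrips = \dgk\rrips$ directly using the honest inclusion chain $\cl\srips_{a_k} \subseteq \ssrips_{a_k} \subseteq \rrips_{a_k}$, Corollary~\ref{cor:back_arrows_are_isomorphisms} together with Lemma~\ref{lem:srips_in_ssrips_isomorphism}, and a two-out-of-three argument to see that the middle inclusion induces a homology isomorphism; then Theorem~\ref{thm:Dssrips_Dsrips} supplies the $\srips$ case. Your version has the pleasant feature that every arrow in the commutative ladder is inclusion-induced, so commutativity is literal functoriality rather than requiring an arrow-reversal step inside the proof; the paper's version, conversely, treats the zigzag $\srips$ as the primary object and derives $\ssrips$ as a corollary. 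The cost is a wash --- the reversal trick is still needed inside Theorem~\ref{thm:Dssrips_Dsrips}, which both proofs use as a black box --- but your decomposition is a perfectly valid and arguably slightly cleaner alternative.
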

  \begin{proof}
    By Lemma~\ref{lem:rips_relaxed_rips_interleave}, we have a multiplicative $c$-interleaving between $\rips$ and $\rrips$.
    Thus, the Persistence Approximation Lemma implies that $\dgk \rrips$ is a $c$-approximation to $\dgk \rips$.

    We have shown in Theorem~\ref{thm:Dssrips_Dsrips} that $\dgk \srips = \dgk\ssrips$, so it will suffice to prove that $\dgk \srips = \dgk \rrips$.
    % As shown in Section~\ref{sec:reversing_zigzags}, the persistence module $\Hom(\srips)$ does not zigzag.
    The rest of the proof follows the same pattern as in Theorem~\ref{thm:Dssrips_Dsrips}.
    For any $a_i, a_{i+1} \in A$, we get the following commutative diagram induced by inclusion maps.
    \[
      \xymatrix@C=0pt{
        {\Hom(\cl\srips_{a_i})} \ar[dr]_\cong && {\Hom(\srips_{a_i})} \ar[ll]_\cong \ar[dl]_\cong \ar[rr] && {\Hom(\cl\srips_{a_{i+1}})} \ar[dl]_\cong\\
        & {\Hom(\rrips_{a_i})} \ar[rr]      && {\Hom(\rrips_{a_{i+1}})} &
      }
    \]
    Corollary~\ref{cor:back_arrows_are_isomorphisms} implies that many of these inclusions induce isomorphisms at the homology level (as indicated in the diagram).
    As a consequence, the following diagram also commutes and the vertical maps are isomorphisms.
    \[
      \xymatrix{
        {\Hom(\cl\srips_{a_i})} \ar[d]_\cong \ar[r] & {\Hom(\cl\srips_{a_{i+1}})} \ar[d]_\cong\\
        {\Hom(\rrips_{a_i})} \ar[r]      & {\Hom(\rrips_{a_{i+1}})} 
      }
    \]
    So, the Persistence Equivalence Theorem implies that $\dgk \srips = \dgk \rrips$ as desired.
  \end{proof}

% section approximation_guarantee (end)
    \subsection{The Linear Complexity of the Sparse Filtration} % (fold)
\label{sec:linear_complexity}

  In this subsection, we prove that the total number of simplices in the sparse Vietoris--Rips filtration is only linear in the number of input points.
  We start by showing that the graph of all edges appearing in the filtration has only a linear number of edges.
  % Two directions to take from there.  1. Pay for all of the simplices.  2. Estimate d and only pay for simplices up to dimension d.
  % There is a total ordering $\prec$ on the vertices of $P$ defined by $p\prec q$ if and only if  $t_p< t_q$.
  % Then, each simplex $\sigma$ has a unique minimum vertex $\min_\prec(\sigma)$ in this ordering.
  % We will show that each vertex is the minimum vertex for at most a constant number of simplices.
  
  % \begin{lemma}\label{lem:constant_bigger_neighbors}
  %   For all $p\in P$ the number of edges $(p,q)\in \ssrips_\infty$ for which $p\prec q$ is at most $\frac{1}{\e}^{O(d)}$.
  % \end{lemma}
  % \begin{proof}
  %   
  % \end{proof}

  For a point $p\in P$, let $E(p)$ be the set of neighbors of $p$ whose removal time is at least as large as that of $p$:
  \[
    E(p) := \{q\in P : t_p \le t_q \text{ and } (p,q)\in \cl\srips_{t_p}\}.
  \]
  To compute the filtrations $\srips$ and $\ssrips$, it suffices to compute $E(p)$ for each $p\in P$.
  In fact $\ssrips_\infty$ is just the clique complex on the graph of all edges $(p,q)$ such that $q\in E(p)$.
  
  \begin{lemma}\label{lem:linear_complexity}
    Given a set of $n$ points in a metric space $\M = (P, \dist)$ with doubling dimension at most $d$ and a net-tree with parameter $\e\le \frac{1}{3}$, the cardinality $|E(p)|$ is at most $\frac{1}{\e}^{O(d)}$ for each $p\in P$.
    % the total number of edges in the sparse Vietoris--Rips filtrations $\srips$ and $\ssrips$ is at most $\frac{1}{\e}^{O(d)}n$.
  \end{lemma}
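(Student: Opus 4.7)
The plan is to reduce the bound on $|E(p)|$ to a standard doubling-dimension packing argument inside the metric ball $\ball(p, t_p)$. Concretely, I want to show two things about any $q \in E(p)$: that $q$ lies in $\ball(p, t_p)$, and that distinct points of $E(p)$ are well separated on the scale $\e t_p$. These two facts immediately yield the desired bound of $(1/\e)^{O(d)}$ by iterated halving of covering balls.

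First I would prove the containment. Since $(p,q)$ is an edge of $\cl\srips_{t_p}$, we have $\rdist_{t_p}(p,q) \le t_p$, and because $\rdist_\alpha \ge \dist$ pointwise, this gives $\dist(p,q) \le t_p$. Hence $E(p) \subseteq \ball(p, t_p)$.

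Next I would establish the packing. Let $q_1, q_2 \in E(p)$ be distinct. By definition $t_{q_1}, t_{q_2} \ge t_p$, so for any sufficiently small $\delta > 0$ both points lie in the open net $\net_{t_p - \delta}$. Lemma~\ref{lem:nets_pack_and_cover} then yields $\dist(q_1, q_2) \ge \kpack\,\e(1-2\e)(t_p - \delta)$, and letting $\delta \to 0$ gives $\dist(q_1, q_2) \ge \kpack\,\e(1-2\e)\, t_p$. The same argument applied to the pair $(p, q_i)$ shows $\dist(p, q_i) \ge \kpack\,\e(1-2\e)\, t_p$ as well, so $\{p\} \cup E(p)$ is a $\kpack\,\e(1-2\e)\,t_p$-separated subset of $\ball(p, t_p)$.

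Finally I would invoke the doubling property. Starting from $\ball(p, t_p)$ and iteratively covering each ball by $\lambda$ balls of half the radius, after $k$ levels we obtain at most $\lambda^k$ balls of radius $t_p/2^k$ that cover $\ball(p, t_p)$. Choosing $k = \lceil \lg\bigl(2/(\kpack\,\e(1-2\e))\bigr) \rceil = O(\log(1/\e))$ ensures $t_p/2^k$ is smaller than half the pairwise separation, so each small ball contains at most one point of $E(p)$. Therefore $|E(p)| \le \lambda^k = 2^{dk} = (1/\e)^{O(d)}$, as required. The only subtle point is that the packing in Lemma~\ref{lem:nets_pack_and_cover} is stated for the open net $\net_\alpha$, while $E(p)$ naturally involves points of the closed net $\cl\net_{t_p}$; the $\delta \to 0$ limit above is what bridges this gap, and I expect this to be the only step requiring real care.
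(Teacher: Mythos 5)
Your proof is correct and follows essentially the same route as the paper's: bound the diameter of $E(p)$ by $2t_p$ via the edge condition, bound the pairwise separation by $\kpack\e(1-2\e)t_p$ via the net packing of Lemma~\ref{lem:nets_pack_and_cover}, and invoke a doubling packing argument to get $(1/\e)^{O(d)}$. Your $\delta\to 0$ step is in fact slightly more careful than the paper, which asserts $E(p)\subseteq\net_{t_p}$ even though the definition only gives $t_q\ge t_p$ (hence $E(p)\subseteq\cl\net_{t_p}$); your argument cleanly covers the $t_q=t_p$ boundary case.
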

  \begin{proof}
    Let $\spread(E(p))$ denote the spread of $E(p)$.
    Since $E(p)$ is a finite metric with doubling dimension at most $d$, the number of points is at most $\spread(E(p))^{O(d)}$.
    So, it will suffice to prove that for all $p\in P$, $\spread(E(p)) = O(\frac{1}{\e})$.

    The definition of $E(p)$ implies that $E(p)\subseteq \net_{t_p}$ and so by Lemma~\ref{lem:nets_pack_and_cover},
    the nearest pair in $E(p)$ are at least $\kpack\e(1-2\e)t_p$ apart.
    % \[
    %   \min_{a,b\in E(p): a\neq b} \dist(a,b) \ge \kpack\e(1-\e)t_p.
    % \]    
    For $q\in E(p)$, since $(p,q)\in \srips_{t_p}$, $\dist(p,q) \le \rdist_{t_p}(p,q) \le t_p$.
    It follows that
    the farthest pair in $E(p)$ are at most $2t_p$ apart.
    % \[
    %   \max_{a,b\in E(p)} \dist(a,b) \le 2 \max_{a\in E(p)} \dist(a,p) \le 2t_p.    
    % \]
    So, we get that $\spread(E(p)) \le \frac{2t_p}{\kpack\e(1-2\e)t_p} = O(\frac{1}{\e})$ as desired.
  \end{proof}
  
  We see that the size of the graph in the filtration is governed by three variables: the doubling dimension, $d$; the packing constant of the net-tree, $\kpack$; and the desired tightness of the approximation, $\e$.
  The preceding Lemma easily implies the following bound on the higher order simplices.

  \begin{theorem}\label{thm:full_linear_complexity}
    Given a set of $n$ points in a metric space $\M = (P, \dist)$ with doubling dimension $d$, the total number of $k$-simplices in the sparse Vietoris--Rips filtrations $\srips$ and $\ssrips$ is at most $\left(\frac{1}{\e}\right)^{O(kd)}n$.
  \end{theorem}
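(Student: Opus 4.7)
The plan is to charge each $k$-simplex appearing in the filtration to a uniquely chosen vertex and then appeal to Lemma~\ref{lem:linear_complexity} to bound the number of simplices charged to any given vertex. For each $k$-simplex $\sigma$ that ever appears in $\srips$ or $\ssrips$, let $p(\sigma)$ be the vertex of $\sigma$ with the smallest deletion time (breaking ties by a fixed ordering on $P$). I will bound the total count by $\sum_{p \in P} |\{\sigma : p(\sigma)=p,\ \dim\sigma = k\}|$.

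The key step is to show that if $p = p(\sigma)$, then the remaining $k$ vertices of $\sigma$ lie in $E(p)$. Fix such a $\sigma$, and let $q \in \sigma \setminus \{p\}$. Since $p$ has minimum deletion time in $\sigma$, we have $t_q \ge t_p$. The simplex $\sigma$ appears in some $\cl\srips_{a_j}$, and because $p \in \cl\net_{a_j}$ we must have $a_j \le t_p$. The edge $(p,q)$ is therefore in $\cl\srips_{a_j}$, meaning $\rdist_{a_j}(p,q) \le a_j$. By Lemma~\ref{lem:rdist_induces_filtrations} applied with $\alpha = a_j$ and $\beta = t_p$, we conclude $\rdist_{t_p}(p,q) \le t_p$, so $(p,q) \in \cl\srips_{t_p}$. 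Combined with $t_q \ge t_p$, this is exactly the condition $q \in E(p)$.

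Consequently, the number of $k$-simplices $\sigma$ with $p(\sigma) = p$ is at most $\binom{|E(p)|}{k}$. Applying Lemma~\ref{lem:linear_complexity}, which gives $|E(p)| \le \left(\tfrac{1}{\e}\right)^{O(d)}$, yields $\binom{|E(p)|}{k} \le \left(\tfrac{1}{\e}\right)^{O(kd)}$. Summing over all $n$ points in $P$ gives the claimed bound of $\left(\tfrac{1}{\e}\right)^{O(kd)} n$ on the total number of $k$-simplices.

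The only delicate point, which I consider the main obstacle, is the charging argument itself: a given simplex $\sigma$ may be introduced, deleted, and (in principle) reconsidered across different scales in the zigzag $\srips$, so I need the charging to be well-defined independently of the scale at which $\sigma$ appears. This is precisely handled by Lemma~\ref{lem:rdist_induces_filtrations}, which lets me transport the containment relation forward in $\alpha$ up to the moment $t_p$ at which $E(p)$ is defined, so that the identity of $p(\sigma)$ and the bound on $|E(p)|$ suffice to control the combinatorics no matter when $\sigma$ entered the filtration.
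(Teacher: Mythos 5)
Your proof is correct and follows essentially the approach the paper intends. The paper gives no explicit proof of this theorem, stating only that it ``easily'' follows from Lemma~\ref{lem:linear_complexity} and remarking just before that lemma that $\ssrips_\infty$ is (contained in) the clique complex on the graph $\{(p,q) : q \in E(p)\}$; your charging argument, assigning each simplex to its vertex of minimum deletion time and using Lemma~\ref{lem:rdist_induces_filtrations} to transport edge membership forward to scale $t_p$, is exactly the elaboration needed to justify that remark and yields $\binom{|E(p)|}{k} \le (1/\e)^{O(kd)}$ simplices per charged vertex. One small clarification worth making: once a vertex is deleted its net membership never recovers (the nets $\net_\alpha$ are monotonically shrinking), so a simplex removed from the zigzag never reappears; your concern about a simplex being ``reconsidered'' cannot actually arise, though your argument handles it correctly anyway since $p(\sigma)$ is defined intrinsically from $\sigma$ rather than from any particular scale of appearance.
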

  % \begin{proof}
  %   For a $k$-simplex $\sigma\in \ssrips_\infty$ let $p\in \sigma$ be the vertex with the smallest deletion time $t_p$.
  %   We will charge teh simplex $\sigma$ to the vertex $p$.
  %   Thus, all of the vertices of $\sigma$ are in $E(p)$.
  %   By Lemma~\ref{lem:linear_complexity}, $|E(p)|$ is at most $\frac{1}{\e}^{O(d)}$.
  %   Thus, at most $\frac{1}{\e}^{O(kd)}$ distinct simplices can be charged to $p$.
  %   Multiplying by $n$ gives the desired bound.
  % \end{proof}

% section linear_complexity (end)
    
  \section{An algorithm to construct the sparse filtration} % (fold)
\label{sec:construction}

  The net-tree defines the deletion times of the input points and thus determines the perturbed metric.
  It also gives the necessary data structure to efficiently find the neighbors of a point in the perturbed metric in order to compute the filtration.
  In fact, this is exactly the kind of search that the net-tree makes easy.
  Then we find all cliques, which takes linear time because each is subset of $E(p)$ for some $p\in P$ and each $E(p)$ has constant size.

  As explained in the Har-Peled and Mendel paper~\cite{har-peled06fast}, it is often useful to augment the net-tree with ``cross'' edges connecting nodes at the same level in the tree that are represented by geometrically close points.
  The set of \emph{relatives} of a node $u\in T$ is defined as
  \begin{align*}
    \rel(u) := \{ v\in T : &\rad(v)\le\rad(u)<\rad(\parent(v)) \text{ and } \\
          & \dist(\rep(u), \rep(v)) \le C \rad(u) \},
  \end{align*}
  where $C$ is a constant bigger than $3$.%
  \footnote{The precise value of $C$ depends on some constants chosen in the construction of the net-tree and can be extracted from the Har-Peled and Mendel paper.  For our purposes, we only need the fact that it is bigger than $3$.}
  The size of $\rel(u)$ is a constant using the same packing arguments as in Lemma~\ref{lem:linear_complexity}.

  \begin{figure}[ht]
    \centering
      \includegraphics[width=\textwidth]{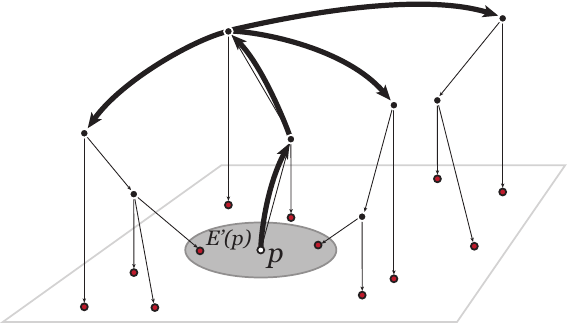}
    \caption{\label{fig:net_tree_search}
      To find the nearby points at or above the level of $v_p$ in the net-tree only requires traveling up a constant number of levels and then searching the relative trees.
    }
  \end{figure}

  This makes it easy to do a range search to find the points of $E(p)$.
  In fact, we will find the slightly larger set $E'(p) = \cl\net_{t_p}\cap \ball(p, t_p)$.
  The search starts by finding $u$ the highest ancestor of $v_p$ whose radius is at most some fixed constant times $t_p$.
  Since the radius increases by a constant factor on each level, this is only a constant number of levels.
  Then the subtrees rooted at each $v\in \rel(u)$ are searched down to the level of $v_p$.
  Thus, we search a constant number of trees of constant degree down a constant number of levels.
  The resulting search finds all of the points of $E(p)$ in constant time.

  Since the work is only constant time per point, the only superlinear work is in the computation of the net-tree.
  As noted before, this requires only $O(n\log n)$ time.
  
% section construction (end)
  \section{Conclusions and Directions for Future Work} % (fold)
\label{sec:conclusion}

  We have presented an efficient method for approximating the persistent homology of the Vietoris--Rips filtration.
  Computing these approximate persistence diagrams at all scales has the potential to make persistence-based methods on metric spaces tractable for much larger inputs.

  Adapting the proofs given in this paper to the \v Cech filtration is a simple exercise.
  Moreover, it may be possible to apply a similar sparsification to complexes filtered by alternative distance-like functions like the distance to a measure introduced by Chazal et al.~\cite{chazal11geometric}.

  Another direction for future work is to identify a more general class of hierarchical structures that may be used in such a construction.
  The net-tree used in this paper is just one example chosen primarily because it can be computed efficiently.

  The analytic technique used in this paper may find more uses in the future.
  We effectively bounded the difference between the persistence diagrams of a filtration and a zigzag filtration by embedding the zigzag filtration in a topologically equivalent filtration that does not zigzag at the homology level.
  This is very similar to the relationship between the levelset zigzag and extended persistence demonstrated by Carlsson et al.~\cite{carlsson09zigzag}.
  In that paper, such a technique gave some stability results for levelset zigzags of real-valued Morse functions on manifolds.
  It may be that other zigzag filtrations can be analyzed in this way.

  % This leaves open an interesting question.  Does any good spanner suffice?  No, of course not.  But maybe any hierarchical spanner does...

% section conclusion (end)
  
  \section*{Acknowledgements} % (fold)
\label{sec:acknowledgements}

I would like to thank Steve Oudot and Fr\'{e}d\'{e}ric Chazal for their helpful comments.
I would also like to acknowledge the keen insight of David Cohen-Steiner who suggested that I look for a non-zigzagging filtration.
This insight became the filtration $\ssrips$ of Section~\ref{sec:no_zigzag}.
This work was partially supported by GIGA grant ANR-09-BLAN-0331-01 and the European project CG-Learning No. 255827.

% section acknowledgements (end)

  \clearpage
  
  \bibliographystyle{plain}
  \bibliography{sparse_rips}
  
\end{document}